\newcommand{\reals}{{\mathbb{R}}}
\DeclareMathOperator{\ima}{im}
\DeclareMathOperator{\Div}{div}
\newcommand{\Vol}{\mathop{\bf vol}}
\newtheorem{thm}{Theorem}[section]
\newtheorem{lem}{Lemma}[section]
\newtheorem{cor}{Corollary}[section]
\newtheorem{asmp}{Assumption}[section]
\newtheorem{defn}{Definition}[section]
\newtheorem{example}{Example}[section]
\def\approxcorrect{\checkmark\kern-1.1ex\raisebox{.89ex}{$\times$}}
\def\eqref#1{equation~\ref{#1}}
\def\1{\bm{1}}
\DeclareMathAlphabet{\mathsfit}{\encodingdefault}{\sfdefault}{m}{sl}
\SetMathAlphabet{\mathsfit}{bold}{\encodingdefault}{\sfdefault}{bx}{n}
\def\gF{{\mathcal{F}}}
\def\gG{{\mathcal{G}}}
\def\gN{{\mathcal{N}}}
\def\gP{{\mathcal{P}}}
\def\gU{{\mathcal{U}}}
\def\gV{{\mathcal{V}}}
\title{On the Decomposition of Differential Game}
\author{Nanxiang Zhou}
\author{Jing Dong}
\author{Yutian Li}
\author{Baoxiang Wang \\ \texttt{\{nanxiangzhou,jingdong\}@link.cuhk.edu.cn, \{liyutian,bxiangwang\}@cuhk.edu.cn}}
\affil{The Chinese University of Hong Kong, Shenzhen}
\date{}
\begin{document}
\maketitle

\begin{abstract}
To understand the complexity of the dynamic of learning in differential games, we decompose the game into components where the dynamic is well understood. One of the possible tools is Helmholtz's theorem, which can decompose a vector field into a potential and a harmonic component. This has been shown to be effective in finite and normal-form games. However, applying Helmholtz's theorem by connecting it with the Hodge theorem on $\reals^n$ (which is the strategy space of differential game) is non-trivial due to the non-compactness of $\reals^n$. Bridging the dynamic-strategic disconnect through Hodge/Helmoltz's theorem in differential games is then left as an open problem \cite{letcher2019differentiable}. In this work, we provide two decompositions of differential games to answer this question: the first as an exact scalar potential part, a near vector potential part, and a non-strategic part; the second as a near scalar potential part, an exact vector potential part, and a non-strategic part. We show that scalar potential games coincide with potential games proposed by \cite{monderer1996potential}, where the gradient descent dynamic can successfully find the Nash equilibrium. For the vector potential game, we show that the individual gradient field is divergence-free, in which case the gradient descent dynamic may either be divergent or recurrent. 
%A wide spectrum of machine learning methods relies on using gradient descent to optimize an objective function that models noncooperative differential games. Popular examples include adversarially generative networks and hierarchical reinforcement learning. 
%In general, our two decompositions decompose a given game into a scalar potential and a vector potential game. 
\end{abstract}

%%
%% The code below is generated by the tool at http://dl.acm.org/ccs.cfm.
%% Please copy and paste the code instead of the example below.
%%

%%
%% This command processes the author and affiliation and title
%% information and builds the first part of the formatted document.
\maketitle

\section{Introduction}
One of the most fundamental questions in game-theoretic learning is whether an uncoupled learning dynamic can ultimately achieve a stable equilibrium through repeated interactions among players. Specifically, in which games and under what conditions can players reach a stable state using learning algorithms such as gradient descent? This issue has gained significant attention, particularly as many advancements in machine learning have relied on gradient descent to optimize the parameters of neural networks, with objective functions that model non-cooperative games. Popular examples include adversarially generative networks \cite{goodfellow2020generative}, federated learning \cite{kairouz2021advances,donahue2021optimality}, multi-agent reinforcement learning \cite{littman1994markov,lowe2017multi}, and any machine learning algorithm trained in an adversarial way. 

A notable result by \cite{hart2003uncoupled,hart2006stochastic} presents a negative finding, demonstrating that no uncoupled learning dynamics can converge to a Nash equilibrium in all games from any initial condition. This raises the critical question of which games a learning process can successfully converge to a Nash equilibrium and which games it cannot.

In the case where the game is finite, or if the game is a normal-form game, this question can be partially answered by decomposing the game through Helmholtz's theorem. When the number of strategies for each player is finite, or when the strategies considered are on a probability simplex, \cite{candogan2010projection,legacci2024geometric} showed that a game can be decomposed into a potential game and a harmonic game. This decomposition categorizes finite games along a spectrum, ranging from players with fully aligned interests (represented by games containing only the potential component) to players with entirely conflicting interests (represented by games containing only the harmonic component). In potential games, where there exists a potential function to quantify how individual strategy changes affect collective utility, players can thus descent along the direction of the gradient of their utility functions, which is equivalent to collectively minimizing the potential function, to reach the Nash equilibrium. In normal form games, the harmonic game is shown to be incompressible, hence implying that common learning dynamics, such as the exponential weight, can lead to Poincar\'e recurrence \cite{legacci2024geometric}. 

In differential games, where the strategies are assumed to be in $\reals^n$, applying Helmholtz's theorem becomes non-straightforward. Different from the normal form games, where the utilities are multilinear and the strategies are naturally in a compact set, the utility of differential games can be much more complicated.  Specifically, Helmholtz's theorem operates in $\reals^3$, where the curl of the gradient field is still a vector field, which means naively applying Helmholtz's theorem only gives a decomposition in $\reals^3$. When $n \ge 4$, the curl of the gradient field is no longer a vector field, and one then needs to leverage the Hodge Theorem to perform the decomposition on a manifold. Connecting and applying Hodge/Helmholtz decomposition on the manifolds is yet to be investigated. A direct sum decomposition, like those in finite games and normal-form games, remains open in differential games \cite{letcher2019differentiable}. 

%Two classes of differential games, the potential games, and the Hamiltonian games, are identified in \cite{letcher2019differentiable}. Their approach is based on decomposing the Jacobian matrix into its symmetric and skew-symmetric parts. However, this is different from the direct sum decomposition results in finite and normal-form games. Specifically, given a game with individual gradient field $g$, it is impossible in general to find a potential game $g_p$ and a Hamiltonian game $g_h$, such that $g = g_p + g_h$. In fact, it is remarked in \cite{letcher2019differentiable} that obtaining such decomposition by connecting the differential-geometric Hodge/Helmholtz decomposition in differential games is left as an open problem. To bridge the dynamic-strategic disconnect that arises with the Helmholtz decomposition to the field of individual gradients of differential games, our work provide the first decomposition result of the differential games. 

\subsection{Related works}
In finite games, where the number of strategies is finite for each player, \cite{candogan2011flows} introduced a method to decompose a given game into a potential and harmonic component. This decomposition maps finite games into a spectrum of players having fully aligned interests (games with only the potential component), to players having completely conflicting interests (games with only the harmonic component). Follow-up works then develop different variants of decompositions for the finite games \cite{cheng2016decomposed,wang2017weighted,li2019note,abdou2022decomposition}. 

This decomposition framework is then extended to normal form games \cite{legacci2024geometric}, where classic no-regret algorithms such as exponential weights are known to be possibly chaotic \cite{palaiopanos2017multiplicative,mertikopoulos2018cycles,vlatakis2019poincare}. Based on the decomposition of normal form games, \cite{legacci2024geometric} provided a principled way to identify cycling behaviors of exponential weights. 

In differential games, \cite{letcher2019differentiable} identified two classes of games based on the symmetric and skew-symmetric parts of the game's Jacobian matrix. The games with the symmetric Jacobian matrix are identified to be potential games, while games with skew-symmetric Jacobian matrix are named Hamiltonian games, which are closely related to harmonic games. However, this is different from the direct sum decomposition results in finite and normal-form games. Specifically, given a game with individual gradient field $g$, it is impossible in general to find a potential game $g_p$ and a Hamiltonian game $g_h$ such that $g = g_p + g_h$. Classic gradient descent methods are known to be convergent for potential games, but they can be non-convergent for Hamiltonian games. They thus introduced symplectic gradient adjustment to find stable fixed points in differential games. They also remarked that connecting the differential-geometric Hodge/Helmholtz decomposition in differential games is left as an open problem. %{\color{blue}todo: }

\subsection{Differential games}
We consider a differential game with $M$ players. Each player $i$ has utility $\left\{u_i: \mathbb{R}^n \rightarrow \mathbb{R}\right\}_{i=1}^M$ and can play a strategy $\omega_i \in \mathbb{R}^{n_i}$ to maximize its utility. We denote the joint strategy as $\omega=\left(\omega_1, \ldots, \omega_M\right) \in \mathbb{R}^n$ where $\sum_{i=1}^M n_i=n$. We also let $\omega_{-i}$ be the joint strategy of all players except for player $i$. To denote the individual components of $\omega$, we write $\omega = (\omega_1, \ldots, \omega_M) = (x_1, \ldots, x_n)$.

\subsection{Our contributions}
We identified two different inner products on the space of differential games, which allows us to apply the Helmholtz decomposition on the vector field space of the utility gradient. Similar to the decomposition of the finite games, we decompose the differential games into three parts, which enjoy different dynamic properties. However, different from the case of finite games, the differential games cannot be decomposed straightforwardly into a potential part and a harmonic part. This is due to the fact that the harmonic component is isomorphic to the de Rham cohomology of the manifold, which is zero when the differential $k$-form is with $k = 1$ and the manifold is $\reals^n$. Instead, we identify a Vector Potential part of the differential games, which are similar to a harmonic game in many ways, such as the challenges it imposes on dynamical systems induced by gradient descent. 

Our two decompositions provide different interpretations of the space of differential games. In the first decomposition, the game is decomposed into an exact scalar potential part, a near vector potential part, and a non-strategic part. We show that the exact scalar potential part of the game is an exact potential game, and the vector potential part poses similar challenges to learning algorithms. Specifically, the standard gradient descent dynamic can exhibit non-convergent behaviors on the vector potential game. To relate the vector potential games and the Hamiltonian games, we show that a Hamiltonian game has to be a vector potential game, but not vice versa. In the second decomposition, the game is decomposed into a near scalar potential part, an exact vector potential part, and a non-strategic part. We show that the vector potential part of the game is flat on any local Nash equilibrium, which imposes significant challenges to first and second-order local Nash equilibrium finding algorithms. 

%Lastly, we provide simulated results on the learning dynamics for games that are a interpolation of the scalar and vector potential game.  

%We are interested in applying the Helmholtz Decomposition to this class of differential games. Then we examine the properties of the subgames.And we decompose the vector field space.  

\section{Main results}
\subsection{Overview of results}
Before we present the technical results, we first provide an overview of our main result. In this work, we provide two decomposition methods for differential games. In the first decomposition, the game can be decomposed into an exact scalar potential game and a near vector potential game. In the second decomposition, the game can be decomposed into a near-scalar potential game and an exact vector potential game. We show that the scalar potential game coincides with the canonical notion of potential game, where there exists a potential function to quantify how individual strategy changes affect collective utility. We call it the scalar potential game as the curl of the vector field is zero. For the other game, we took the name vector potential game from physics, where vector potential refers to a vector field with zero divergence. The game has not been defined before. In fact, in finite games, because the combination curl of the gradient is zero, the vector potential game does not exist.    The two decompositions can be summarized by the following informal claim. 
\begin{thm}[Informal statement combining Theorem \ref{thm:decompose_C1} and Theorem \ref{thm:decompose_C1_tilde}]
    The simultaneous gradient $Du = (\nabla_{\omega_1} u_1, \ldots, \nabla_{\omega_M} u_M)$ can be decomposed as
    \begin{align*}
        Du = \ & X_\gP + X_\gV \\
        = \ & X_{\tilde{\gP}} + X_{\tilde{\gV}} \,,
    \end{align*}
    where $\gP$ is the class of exact scalar potential game, $\gV$ is the class of near vector potential game, $\tilde{\gP}$ is the class of near scalar potential game and $\tilde{\gV}$ is the class of exact vector potential game. 
\end{thm}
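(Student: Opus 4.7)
The plan is to recognize this informal statement as the direct combination of the two formal decomposition theorems, Theorem \ref{thm:decompose_C1} and Theorem \ref{thm:decompose_C1_tilde}, each of which establishes one of the two claimed splittings. Thus the proof reduces to proving those two theorems separately, and the common strategy is as follows.

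The overall approach is to view the simultaneous gradient $Du$ as a vector field on $\reals^n$ (equivalently, a $1$-form via the musical isomorphism with the standard metric) and carry out a Helmholtz/Hodge-style orthogonal decomposition. First I would fix a Hilbert space of games in which the admissible $Du$ lies, equipped with one of the two inner products highlighted as the main contribution of the paper. For the first decomposition, the inner product is chosen so that the subspace of exact scalar potential fields (gradients $\nabla \phi$) is closed; for the second, the inner product is chosen so that the subspace of exact vector potential fields (images of the codifferential applied to $2$-forms) is closed. Then the projection theorem in Hilbert space yields the two splittings $Du = X_\gP + X_\gV$ and $Du = X_{\tilde\gP} + X_{\tilde\gV}$, where in each case the ``near'' component is exactly the orthogonal complement of the ``exact'' component in the corresponding inner product.

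The key technical steps, in order, are: (i) specify the two inner products on the game space and verify they are well defined on the simultaneous gradients arising from smooth utilities; (ii) for each inner product, show the relevant exact subspace is closed by solving the variational problem $\min_\phi \|Du - \nabla \phi\|^2$ or $\min_\beta \|Du - \delta \beta\|^2$ through an Euler-Lagrange argument, and control the solvability of the resulting Poisson-type equation on $\reals^n$; (iii) read off the orthogonality condition satisfied by the residual to obtain the ``near'' characterization, i.e., that the residual is annihilated by every exact gradient (respectively by every exact codifferential) in the chosen inner product; (iv) check that each resulting component is itself the simultaneous gradient of some underlying game, so that the decomposition lies inside the space of differential games and not merely inside a larger ambient space of vector fields.

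The main obstacle, as emphasized in the introduction, is the non-compactness of $\reals^n$. The classical Hodge theorem on closed Riemannian manifolds delivers the orthogonal decomposition through the compact resolvent of the Laplacian, but on $\reals^n$ the Laplacian is not Fredholm, the first de Rham cohomology vanishes (so there is no nontrivial exact harmonic piece, explaining why a genuine harmonic summand is replaced here by a ``near'' piece), and integrability and decay conditions become delicate. The technical heart of the proof therefore lies in engineering the two inner products so that the target exact subspace is closed, the orthogonal projection is well defined, and the image and residual both correspond to gradients of honest utilities; once this is in place, the informal claim follows by invoking the Hilbert space projection theorem once for each inner product and combining the two results.
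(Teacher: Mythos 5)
Your high-level framing (an orthogonal Hilbert-space decomposition, applied once for each of the two inner products) matches the paper, but your key technical step (ii) is where the argument would break, and it is precisely the step the paper is engineered to avoid. You propose to prove that the exact subspaces (gradients $\ima(\tilde{d}_1)$, respectively images of a codifferential) are closed by solving the variational problems $\min_\phi \|Du-\nabla\phi\|^2$ and controlling a Poisson-type equation on $\reals^n$. On $\reals^n$ there is no Poincar\'e inequality and the Laplacian is not Fredholm, so $\ima(\tilde{d}_1)$ is in general \emph{not} closed in $L^2$; this is exactly why the paper's second decomposition yields only a \emph{near} scalar potential part, characterized by the closure $\overline{\ima(\tilde{d}_1)}$ and an $\epsilon$-potential, rather than an exact one. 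The paper never proves that any image is closed. In Theorem \ref{thm:decompose_C1} the closed subspace is $\ker(d_2)$, which is closed for free because $d_2$ is a bounded operator (Lemma \ref{lem:d2_bounded}); its identification with exact gradients is made separately, via $d_2 d_1=0$ and the Poincar\'e lemma (Lemmas \ref{lem:exact_then_closed} and \ref{lem:potential1}), and plays no role in establishing the splitting itself. In Theorem \ref{thm:decompose_C1_tilde} the splitting is $\overline{\ima(\tilde{d}_1)}\oplus\ima(\tilde{d}_1)^\perp$, where both summands are closed by construction. Your proposal also misassigns which side is ``near'' in the first decomposition: there the scalar potential part is the exact one ($\ker(d_2)$), and it is the \emph{vector} potential part that is only ``near,'' because divergence-freeness is obtained only under additional compact-support assumptions (Lemma \ref{lem:div0_ae}).

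Your step (iv), checking that each component $X_\gP$, $X_\gV$ is itself the simultaneous gradient of an honest game, is a legitimate requirement, but be aware that the paper does not establish it either: the formal theorems decompose the vector field $Du$ inside $C_1$ (respectively $\tilde{C}_1$), and the game-theoretic labels are attached by fiat through the definitions of the classes $\gP$, $\gV$, $\gN$. You should either state the result at the level of vector fields, as the paper does, or supply an argument for (iv) that the paper does not contain.
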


For the vector potential game, we show that the vector field of the gradient is divergence-free. As an implication of that, algorithms based on the gradient descent dynamic may be hard to converge.
Specifically, we show that without a specific initialization requirement, the gradient descent dynamic can drift to infinity, in which case the orbits are not bounded, or the induced trajectory may be chaotic. In this case, while it is possible to pick a good initialization point, there is no principle on how to pick such a point in general. This thus provides insights into the difficulty of learning differential games, as we describe in the below claim.
\begin{thm}[Informal statement combining Theorem \ref{thm:poincare_near_vector} and Theorem \ref{thm:poincare_vector}]
    When a game is in $\gV$ and the utility function has compact support, the gradient descent dynamic is Poincar\'e recurrent. When a game is in $\tilde{\gV}$, and the orbits are bounded, the gradient descent dynamic is Poincar\'e recurrent. When the dynamic is Poincar\'e recurrent, for almost every initialization, the induced trajectory $x(t)$ returns arbitrarily close to the initial point infinitely often.
\end{thm}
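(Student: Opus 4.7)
The plan is to reduce both statements to the classical Poincar\'e recurrence theorem, which asserts that a measure-preserving flow on a space of finite measure returns almost every point arbitrarily close to itself infinitely often. To apply it, I need to establish, for each case, (i) a flow-invariant region of finite Lebesgue measure in $\reals^n$, and (ii) volume preservation of the continuous-time gradient-descent flow $\dot x = X(x)$ on that region. Ingredient (ii) will follow from Liouville's theorem once I show the relevant vector field is divergence-free, since $\frac{d}{dt}\,\mathrm{vol}(\phi_t(A)) = \int_{\phi_t(A)} \nabla \cdot X \, dx$.

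For the exact vector potential case $\tilde{\gV}$, the defining property of the class is that $X_{\tilde{\gV}}$ is divergence-free pointwise, so Liouville immediately yields measure preservation of $\phi_t$. The assumption that orbits are bounded then supplies a ball $B_R$ containing any chosen orbit; the measurable, flow-invariant set $\overline{\bigcup_{t \in \reals}\phi_t(\{x_0\})}$ sits inside $B_R$ and thus has finite Lebesgue measure, so Poincar\'e recurrence applies to the restriction of $\phi_t$ to this hull and gives the return-to-initial-point conclusion for almost every $x_0$.

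For the near vector potential case $\gV$, the assumption that each utility $u_i$ has compact support forces $X_\gV$ to vanish outside some common compact $K$. Every orbit is therefore trapped inside a closed ball $B_R \supset K$, which is itself flow-invariant and of finite Lebesgue measure, handling ingredient (i) automatically. For (ii), I plan to show that near vector potential fields are likewise divergence-free on the interior of their support: the class $\gV$ is characterized by being orthogonal, under the inner product chosen for the first decomposition, to all exact scalar potential fields $\nabla \phi$. Testing this orthogonality against compactly supported smooth $\phi$ and integrating by parts (with no boundary terms, thanks to compact support) produces the distributional identity $\nabla \cdot X_\gV = 0$, which upgrades to a pointwise identity by smoothness, and Liouville again gives volume preservation on $B_R$.

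The main obstacle I anticipate is precisely this last step for the near vector potential case: converting the functional-analytic, orthogonality-based definition of $\gV$ into a pointwise divergence-free statement. Unlike $\tilde{\gV}$, which is defined by an exactness condition that directly forces zero divergence, membership in $\gV$ is given only relative to the decomposition inner product, so one must choose test functions carefully and reconcile that inner product with the ordinary $L^2$ pairing on $\reals^n$; the compact-support hypothesis on $u$ is exactly what makes this bridging argument clean. Once divergence-freeness is established, the quoted statement that $x(t)$ returns arbitrarily close to $x(0)$ infinitely often for almost every initialization is precisely the conclusion of Poincar\'e recurrence applied to the finite-measure invariant set constructed in each case.
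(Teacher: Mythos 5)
Your overall strategy---divergence-free vector field, Liouville's theorem for volume preservation, then Poincar\'e recurrence on a finite-measure invariant set---is exactly the paper's route (Lemmas \ref{lem:divDU_0} and \ref{lem:divDU_0_vector}, Theorems \ref{thm:liouville} and \ref{thm:poincare}). However, there are two concrete gaps.

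First, in the $\tilde{\gV}$ case your finite-measure invariant set is the closure of a single orbit, $\overline{\bigcup_{t}\phi_t(\{x_0\})}$. For $n\ge 2$ this set generically has Lebesgue measure zero, so Poincar\'e recurrence applied to it is vacuous: ``almost every point'' of a null set tells you nothing about the chosen $x_0$, and the theorem's conclusion is an almost-everywhere statement with respect to Lebesgue measure on a region of initializations. The paper instead formalizes ``orbits are bounded'' as Assumption \ref{asmp:bounded_orbit}: there is a \emph{compact forward-invariant set $\Omega$ of positive Lebesgue measure}, on which the restricted Lebesgue measure is finite and (by Liouville plus invariance) preserved by $\theta_t$; Theorem \ref{thm:poincare} is then applied to $(\Omega,\mu)$. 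Also note that $\tilde{\gV}$ is not ``divergence-free by definition'': it is defined by $Du\in\ima(\tilde{d}_1)^\perp$, and $\Div(Du)=0$ a.e.\ still has to be extracted by pairing against $\tilde{d}_1 f$ for $f\in C_c^\infty$ and integrating by parts (Lemma \ref{lem:divDU_0_vector}, which also needs the density of $C_c^\infty$ in $L^2$ via Lemma \ref{lem:axuiliary_1}); this step is easy only because the inner product on $\tilde{C}_1$ is plain $L^2$.

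Second, for the $\gV$ case you correctly identify the hard step but do not carry it out. Because the inner product on $C_1$ is $H^1$-type, orthogonality of $X$ to $d_1 f$ does \emph{not} integrate by parts to $\int f\,\Div(X)=0$; it gives
\begin{align*}
\int_{\reals^n} f \cdot \Div(X)\,dx + \sum_{j=1}^n \int_{\reals^n} \frac{\partial f}{\partial x_j}\cdot\frac{\partial \Div(X)}{\partial x_j}\,dx = 0
\end{align*}
for all test $f$, i.e.\ only the elliptic identity $\Div(X)-\Delta\,\Div(X)=0$ in the distributional sense---which is precisely why the paper calls these games only \emph{near} vector potential in general. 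The paper's resolution (Lemma \ref{lem:div0_ae}) is to use the regularity and compact support of $X$ to substitute $f=\Div(X)$ itself as the test function, which yields $\|\Div(X)\|_{L^2}^2+\|\nabla\Div(X)\|_{L^2}^2=0$ and hence $\Div(X)=0$ almost everywhere. Your proposal flags this as ``the main obstacle'' but offers no mechanism to close it; as written, the claimed distributional identity $\nabla\cdot X_{\gV}=0$ does not follow from the integration by parts you describe. Once these two points are repaired, the rest of your argument (compact support of $Du$ implies a forward-invariant ball, then Liouville and Poincar\'e) matches the paper's proof of Theorem \ref{thm:poincare_near_vector}.
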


%Before we introduce our decomposition of the differential games, we first begin with some necessary definitions of weak derivatives and Sobolev space.  
\subsection{Notations and background}
Let $L_{\mathrm{loc}}^1(\reals^n)$ denote the set of locally integrable functions on $\reals^n$.  Let $C^\infty(\reals^n)$ denote the space of infinitely differential functions $f: \reals^n \rightarrow \reals$. We let the subscript $c$ denote all the compactly supported functions, i.e. $C^\infty_c(\reals^n)$ denotes the space of compactly supported, infinitely differential functions on $\reals^n$.

In the context of infinite-dimensional normed linear spaces, Hilbert spaces possess a direct sum decomposition related to the inner product. Therefore, to ensure that our considered spaces are Hilbert spaces (which are complete inner product spaces), we need to use the concept of weak derivatives. %The introduction of weak derivatives is solely to guarantee the completeness of our inner product space. If the reader is not familiar with the weak derivative, it can be understood as the standard notion of derivatives. 

\begin{defn}[Weak derivative, \cite{evans1998partial}]
Suppose $u, v \in L_{\mathrm{loc}}^1(\reals^n)$ and $\alpha$ is a multiindex. We say that $v$ is the $\alpha^{\text {th}}$-weak partial derivative of $u$, written
$$
D^\alpha u=v,
$$
provided
$$
\int_{\reals^n} u D^\alpha \phi d x=(-1)^{|\alpha|} \int_{\reals^n} v \phi d x
$$
for all $\phi \in C_c^{\infty}({\reals^n})$.
\end{defn}

When the derivative of $f$ exists, it is equivalent to the weak derivative. The introduction of weak derivatives is solely to guarantee the completeness of our inner product space. To get an intuitive understanding of the weak derivative, we use the following example as an illustration. 
\begin{example}
    Define $f(x) \in L^2(\reals^n)$,
    \begin{align*}
    f(x) = \begin{cases}
         0 & x < -1 \\
         x+1 & -1 \leq x \leq 0 \\
         -x+1 & 0 < x \leq 1 \\
         0 & x > 1 
         \end{cases}\,, \quad 
        f^\prime(x) = \begin{cases}
         0 & x < -1 \\
         1 & -1 \leq x \leq 0 \\
         -1 & 0 < x \leq 1 \\
         0 & x > 1 
         \end{cases}
    \end{align*}
    Notice that the derivative of $f$ is not defined at $0$, but the weak derivative $f^\prime(x)$ can be defined.
    \begin{figure}[h]
        \centering
        \includegraphics[width=0.5\linewidth]{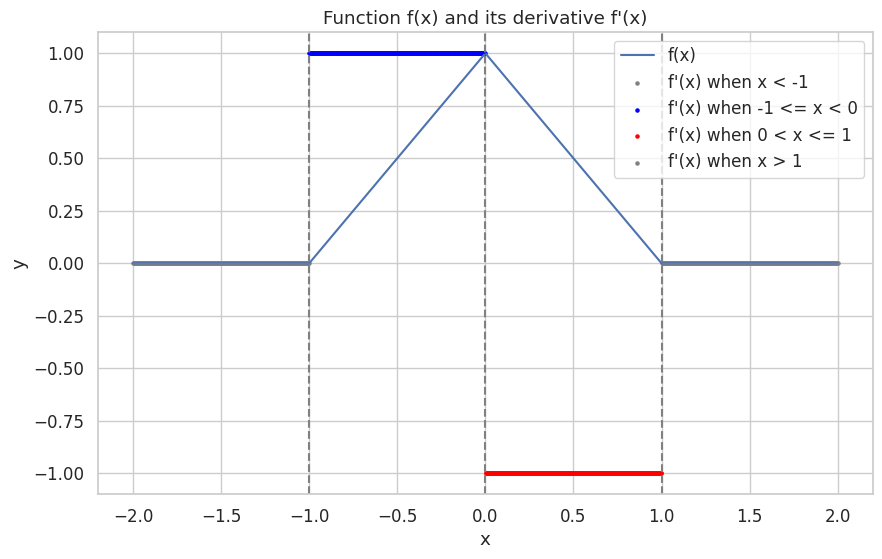}
        \caption{Illustration of the example function and its weak derivative. }
        \label{fig:weak_der}
    \end{figure}
\end{example}

\begin{defn}[Sobolev Space, \cite{evans1998partial}]\label{defn:sobolev}
    The Sobolev space $H^{k} (\reals^n)$ consists of all locally summable functions $f: \reals^n \rightarrow \reals$ such that for each multiindex $\alpha$ with $\alpha \leq k$, $D^\alpha u $ exists in the weak sense and belongs to $L^2(\reals^n)$. Moreover, if $f \in H^{k} (\reals^n) $, we define its norm to be 
    \begin{align*}
        \|f\|_{H^{k} } = \left(\sum_{\alpha \leq k} \int_{\reals^n} |D^\alpha f|^2 dx\right)^{1/2} \,,
    \end{align*}
    where $D^\alpha f$ is the $\alpha$th-weak partial derivative of $f$.  
\end{defn}

\subsection{Decomposition of differential Games}
We follow a similar decomposition roadmap to that of finite games and normal form games \cite{candogan2011flows,legacci2024geometric}. However, different from differential games, the utility functions of normal-form games are multilinear and the strategy space is compact, which makes it significantly easier to decompose. To complete our decomposition, we provide a description of the gradient vector field of the utility function, which is with a newly introduced inner product and norms. 

To begin with, we first introduce three spaces, $C_0, C_1, C_2$, and operators $d_1, d_2$. The domains and codomains of $d_1$ and $d_2$ are summarized as follows.
\begin{align*}
    C_0 \overset{d_1}{\rightarrow} C_1 \overset{d_2}{\rightarrow} C_2 \,.
\end{align*}
We then provide a decomposition of $C_1$, which captures the description of the gradient vector field of the game, based on the Helmholtz decomposition.

We first define space $C_0$ as follows. 

\begin{defn}\label{defn:C0}
    The Hilbert space $C_0$ is defined as $C_0 = \overline{C^\infty(\reals^n)}^{\|\cdot\|_{H^2}}$, which is the closure of the space $C^\infty(\reals^n)$ with norm $\|\cdot\|_{H^2}$. The inner product is defined as 
    \begin{align*}
        \langle f, g\rangle_{_{0}} 
        = \ & \int_{\reals^n} f(x) g(x) dx + \sum^n_{i=1} \int_{\reals^n} \frac{\partial f}{\partial x_i} \cdot \frac{\partial g}{\partial x_i} dx \\
        & \ + \sum^n_{i=1} \sum^n_{j=1}\int_{\reals^n} \frac{\partial^2 f}{\partial x_i \partial x_j} \cdot \frac{\partial g}{\partial x_i \partial x_j}  dx \,.
    \end{align*}
    The norm is thus $\|f\|_{_{0}} = \left[\langle f, f \rangle_0\right]^{1/2}$.
\end{defn}
To ensure a simpler notation, we use $x_i$ to denote the integrals and derivatives with respect to the $i$-th component of the joint strategy. We remark that $\frac{\partial f}{\partial x_i}$ is the weak derivative with respect to $f$ on each component of possible joint strategies $\omega$. 

\begin{defn}\label{defn:C1}
    The Hilbert space $C_1$ is defined as $$C_1 = \left\{X = (f_1, \ldots, f_n) \mid f_i \in H^{1} (\reals^n)\right\}\,.$$ The inner product is defined as 
    \begin{align*}
        \langle X, Y \rangle_{_{1}} = \sum^n_{i=1} \int_{\reals^n} f_i \cdot g_i dx + \sum^n_{i=1}\sum^n_{j=1}\int_{\reals^n}  \frac{\partial f_i}{\partial x_j} \cdot \frac{\partial g_i}{\partial x_j}dx \,. 
    \end{align*}
    The norm is thus $\|X\|_{_{1}} = \left[ \langle X, X \rangle_1\right]^{1/2}$.
\end{defn}
We define the operator that maps between $C_0$ and $C_1$ as $d_1$. 
\begin{defn}\label{defn:d_1}
    The operator $d_1: C_0 \rightarrow C_1$ is defined as follows. For $f \in C_0$, $$d_1 f = \left(\frac{\partial f}{\partial x_1}, \ldots, \frac{\partial f}{\partial x_n}\right)\,.$$
\end{defn}

Lastly, we define the space $C_2$ and the operator $d_2$. 
\begin{defn}\label{defn:C2}
    The Hilbert space $C_2$ is defined as $$C_2 = \left\{\gF \bigm| (\gF)_{ij} = \begin{cases} 
      f_{ij} & i < j \\
      0 & i \geq j
   \end{cases} \,, f_{ij} \in L^2\left(\reals^n\right), \forall i, j \in [n]\right\}\,.$$ 
   The inner product is defined as $$\langle \gF, \gG \rangle_{_{2}} = \sum_{i < j} \int_{\reals^n} f_{ij} \cdot g_{ij} dx \,.$$ The norm is thus $\|\gF\|_{_{2}} = \left[ \langle \gF, \gF \rangle_2\right]^{1/2}$.
\end{defn}

\begin{defn}\label{defn:operator2}
    The operator $d_2: C_1 \rightarrow C_2$ is defined as follows. For $X = (f_1, \ldots, f_n)$, $$(d_2 X)_{ij} = \begin{cases}
        -\frac{\partial f_i}{\partial x_j} + \frac{\partial f_j}{\partial x_i} &  i < j \\
        0 & i \geq j \,.
    \end{cases}$$
\end{defn}

We now connect these definitions to the class of differential games. We first consider differential games with a class of utility function $u \in C_0^M$, where $$C_0^M = \{u = (u_1, \ldots, u_M) \mid u_i \in C_0 \cap C^\infty (\reals^n), \forall i\} \,.$$ We then define the operator $D$, such that 
\begin{align*}
    D_m u = \ & (0, \ldots, \nabla_{\omega_m} u_m, \ldots) \,, \\
    Du = \ &  \sum^M_{m=1} D_m u = ( \nabla_{\omega_1} u_1, \ldots,  \nabla_{\omega_M} u_M)\,.
\end{align*}
Then, we first show that the gradient vector fields are in $C_1$. 

\begin{lem}\label{lem:Du_in_C1}
    For any $m \in [M]$, we have $Du \in C_1$, i.e. $(Du)_m \in H^1(\reals^n)$.
\end{lem}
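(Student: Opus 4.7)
The plan is to unpack the definitions and show that $C_0$ is literally a subspace of the Sobolev space $H^2(\reals^n)$, so that the gradient operator produces components one Sobolev order lower. The claim then falls out immediately.

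First I would verify that the inner product $\langle \cdot, \cdot\rangle_0$ in Definition \ref{defn:C0} is exactly the standard $H^2(\reals^n)$ inner product: the three summands correspond, respectively, to the $L^2$ pairing of $f$ and $g$, the sum of $L^2$ pairings of their weak first partials, and the sum of $L^2$ pairings of their weak second partials. Hence $\|\cdot\|_0 = \|\cdot\|_{H^2}$, and $C_0 = \overline{C^\infty(\reals^n)}^{\|\cdot\|_{H^2}}$ is a closed subspace of $H^2(\reals^n)$. In particular, for every $u_m \in C_0 \cap C^\infty(\reals^n)$ the functions $u_m$, $\partial u_m / \partial x_i$, and $\partial^2 u_m / \partial x_i \partial x_j$ all lie in $L^2(\reals^n)$ for every $i,j \in [n]$.

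Next I would unpack $Du$ componentwise. Writing $\omega = (x_1,\ldots,x_n)$ as in the setup and letting $I_m \subseteq [n]$ index the coordinates belonging to player $m$'s strategy $\omega_m$, the simultaneous gradient $Du = (\nabla_{\omega_1} u_1, \ldots, \nabla_{\omega_M} u_M)$ is the tuple whose $k$-th entry, for $k \in I_m$, equals $f_k := \partial u_m / \partial x_k$. Thus every component of $Du$ is a first-order weak partial derivative of some $u_m \in H^2(\reals^n)$.

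Finally, for each such $f_k$ I would check the two defining conditions of $H^1(\reals^n)$: (i) $f_k \in L^2(\reals^n)$, which holds because $u_m \in H^2$ forces its first partials into $L^2$; and (ii) for every $j \in [n]$, the weak partial $\partial f_k / \partial x_j = \partial^2 u_m / \partial x_k \partial x_j$ exists and lies in $L^2(\reals^n)$, again by $u_m \in H^2$. Combining (i) and (ii) over all $j$ yields $f_k \in H^1(\reals^n)$, and doing this for every $k \in [n]$ gives $Du \in C_1$.

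There is essentially no hard step here: the entire argument is a chase through norms and definitions, with the only mild subtlety being the explicit identification $\langle \cdot, \cdot \rangle_0 = \langle \cdot, \cdot\rangle_{H^2}$, after which the statement reduces to the standard fact that $\nabla: H^{k} \to (H^{k-1})^n$ is bounded.
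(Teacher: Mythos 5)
Your proposal is correct and follows essentially the same route as the paper: both arguments observe that finiteness of $\|u_m\|_0$ (equivalently, membership in $H^2(\reals^n)$) forces the first and second weak partials of $u_m$ into $L^2(\reals^n)$, which is exactly what is needed for each component of $Du$ to lie in $H^1(\reals^n)$. Your write-up is somewhat more explicit about identifying $\langle\cdot,\cdot\rangle_0$ with the standard $H^2$ inner product and about indexing the components of $Du$ by player, but the underlying argument is identical.
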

\begin{proof}
     As $u_m \in C_0$, by the definition of the inner product, we have
     \begin{align*}
         \|u_m\|_0^2 = \ & 
         \int_{\reals^n} |u_m|^2 dx + \sum^n_{i=1} \int_{\reals^n} \left|\frac{\partial u_m}{\partial x_i}\right|^2 dx \\
         & \ + \sum^n_{i=1} \sum^n_{j=1} \int_{\reals^n} \left|\frac{\partial^2 u_m}{\partial x_i \partial x_j}\right|^2 dx \\
         < \ & \infty \,.
     \end{align*}
     Therefore, $\left\|\frac{\partial u_m}{\partial x_i}\right\|^2_{L^2} < \infty$,$\left\|\frac{\partial^2 u_m}{\partial x_i \partial x_j}\right\|^2_{L^2} < \infty$. Hence $Du \in C_1$. 
\end{proof}

To obtain a decomposition of $C_1$, it remains to show that $d_2$ is a bounded linear operator, which is guaranteed by leveraging the inner products defined on $C_0$, $C_1$. 
\begin{lem}\label{lem:d2_bounded}
    $d_2$ is a bounded linear operator. 
\end{lem}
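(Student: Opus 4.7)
The plan is straightforward since $d_2$ is manifestly linear (it is built from weak partial derivatives which are linear operators on $H^1(\reals^n)$), so the only real content is boundedness, i.e. producing a constant $C$ with $\|d_2 X\|_{_2} \le C \|X\|_{_1}$ for every $X = (f_1, \ldots, f_n) \in C_1$.

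First I would write out $\|d_2 X\|_{_2}^2$ directly from Definitions \ref{defn:C2} and \ref{defn:operator2}, obtaining
\begin{align*}
    \|d_2 X\|_{_2}^2 = \sum_{i<j} \int_{\reals^n} \left| -\frac{\partial f_i}{\partial x_j} + \frac{\partial f_j}{\partial x_i}\right|^2 dx.
\end{align*}
Next I would apply the pointwise inequality $(a-b)^2 \le 2a^2 + 2b^2$ inside the integrand to get
\begin{align*}
    \|d_2 X\|_{_2}^2 \le 2 \sum_{i<j} \int_{\reals^n} \left|\frac{\partial f_i}{\partial x_j}\right|^2 dx + 2 \sum_{i<j} \int_{\reals^n} \left|\frac{\partial f_j}{\partial x_i}\right|^2 dx \le 2 \sum_{i=1}^n \sum_{j=1}^n \int_{\reals^n} \left|\frac{\partial f_i}{\partial x_j}\right|^2 dx.
\end{align*}

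Finally I would compare with the $C_1$ norm from Definition \ref{defn:C1}: the right-hand side is at most $2\|X\|_{_1}^2$ since $\|X\|_{_1}^2$ additionally includes the nonnegative $L^2$ terms $\sum_i \int |f_i|^2 dx$ and all the cross-derivative terms. Hence $\|d_2 X\|_{_2} \le \sqrt{2}\,\|X\|_{_1}$, proving boundedness with operator norm at most $\sqrt{2}$.

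I do not anticipate a real obstacle here, the entire argument is two applications of elementary inequalities, and the use of weak derivatives (rather than classical ones) is harmless because weak differentiation is itself a linear operation on $H^1(\reals^n)$ and the integrands are well-defined in $L^2(\reals^n)$ by the definition of $C_1$. The only subtle point worth flagging is that one must read $\frac{\partial f_i}{\partial x_j}$ as the weak partial derivative in $L^2(\reals^n)$ to make the integrals above rigorous; this is legitimate precisely because Definition \ref{defn:C1} forces each $f_i \in H^1(\reals^n)$.
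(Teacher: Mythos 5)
Your proposal is correct and follows essentially the same route as the paper: linearity is immediate from the definition, and boundedness follows from the pointwise inequality $(a-b)^2 \le 2a^2 + 2b^2$ combined with the derivative terms already present in the $C_1$ norm. The only difference is cosmetic --- you track the constant more carefully and obtain operator norm at most $\sqrt{2}$, whereas the paper settles for the cruder bound $\|d_2 X\|_{_2}^2 \le 4\|X\|_{_1}^2$; both suffice.
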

\begin{proof}
    By the definition of $d_2$, it is clearly linear. To show that it is bounded, we need to show that there exists some constant $c$ such that $\|d_2X\|_2 \leq c \|X\|_1$. Using the inequality of $(a-b)^2 \leq 2a^2 + 2b^2$, we have
    \begin{align*}
        \|d_2X\|_{_{2}}^2 
        = \ & \sum_{i < j} \int_{\reals^n} \left|-\frac{\partial f_i}{\partial x_j} + \frac{\partial f_j}{\partial x_i}\right|^2 dx \\
        \leq \ & 2 \left(\sum_{i < j} \int_{\reals^n} \left|\frac{\partial f_i}{\partial x_j} \right|^2 + \left|\frac{\partial f_j}{\partial x_i}\right|^2 dx\right) \\
        \leq \ & 4 \|X\|_1^2 \,,
    \end{align*}
    where the last inequality is by noticing the second term of the definition of $\|\cdot\|_1$. Taking the square root of both sides shows that $d_2$ is bounded. 
\end{proof}

\begin{thm}\label{thm:decompose_C1}
    $C_1$ can be decomposed as 
    \begin{align*}
        C_1 = \ker(d_2) \oplus \ker(d_2)^\perp \,.
    \end{align*}
    Then $$Du = X_\gP + X_\gV\,,$$ where $X_\gP = \ker (d_2)$, $X_\gV = \ker(d_2)^\perp$. 
\end{thm}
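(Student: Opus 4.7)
The plan is to reduce the theorem to the standard orthogonal projection theorem for Hilbert spaces, once the kernel of $d_2$ is shown to be closed. Since $C_1$ is by construction a Hilbert space (Definition \ref{defn:C1}), what is really needed is that $\ker(d_2)$ is a closed linear subspace of $C_1$; the decomposition $C_1 = \ker(d_2) \oplus \ker(d_2)^\perp$ then follows immediately, and $Du \in C_1$ (Lemma \ref{lem:Du_in_C1}) gives the componentwise expression $Du = X_\gP + X_\gV$.

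First, I would invoke Lemma \ref{lem:d2_bounded} to conclude that $d_2: C_1 \to C_2$ is a bounded linear operator between Hilbert spaces, and hence continuous. From continuity, the kernel $\ker(d_2) = d_2^{-1}(\{0\})$ is the preimage of the closed singleton $\{0\} \subset C_2$, so it is a closed linear subspace of $C_1$. This is the only place where Lemma \ref{lem:d2_bounded} genuinely enters; without boundedness, $\ker(d_2)$ need not be closed and the orthogonal decomposition could fail.

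Next, I would apply the orthogonal projection theorem for Hilbert spaces: for any closed subspace $M$ of a Hilbert space $H$, one has $H = M \oplus M^\perp$, meaning every $v \in H$ admits a unique decomposition $v = v_M + v_{M^\perp}$ with $v_M \in M$ and $v_{M^\perp} \in M^\perp$ (where $M^\perp$ is taken with respect to $\langle \cdot, \cdot \rangle_{_{1}}$). Applied with $H = C_1$ and $M = \ker(d_2)$, this yields
\begin{equation*}
    C_1 = \ker(d_2) \oplus \ker(d_2)^\perp \,.
\end{equation*}

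Finally, by Lemma \ref{lem:Du_in_C1} we have $Du \in C_1$, so the projection theorem furnishes the unique decomposition $Du = X_\gP + X_\gV$ with $X_\gP \in \ker(d_2)$ and $X_\gV \in \ker(d_2)^\perp$. There is no serious obstacle here: the substantive work has already been done in constructing the Sobolev-type inner product that makes $C_1$ complete and in verifying the bound $\|d_2 X\|_{_{2}} \leq 2\|X\|_{_{1}}$. The rest is a direct invocation of the Hilbert space projection theorem.
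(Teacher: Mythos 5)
Your proposal is correct and follows essentially the same route as the paper: boundedness of $d_2$ (Lemma \ref{lem:d2_bounded}) gives closedness of $\ker(d_2)$, the Hilbert space projection theorem gives $C_1 = \ker(d_2) \oplus \ker(d_2)^\perp$, and Lemma \ref{lem:Du_in_C1} places $Du$ in $C_1$ so it inherits the decomposition. Your write-up is in fact slightly more explicit than the paper's about why continuity implies the kernel is closed.
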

\begin{proof}
    As $d_2$ is bounded, we have $\overline{\ker(d_2)} = \ker(d_2)$. Then, we obtain the result by noticing  $C_1 = \overline{\ker(d_2)} \oplus \ker(d_2)^\perp$. By Lemma \ref{lem:Du_in_C1}, we can decompose $Du$ as described.  
\end{proof}

Now we can define the following classes of subgames, which can be interpreted as the space of scalar potential games, near vector potential games, and non-strategic games. 
\begin{align*}
    \gP = \ & \left\{u \in C_0^M \mid Du \neq 0\,, Du \in \ker(d_2)\right\}\cup \{0\} \,, \\
    \gV = \ & \left\{u \in C_0^M \mid Du \neq 0\,, Du \in \ker(d_2)^\perp\right\} \cup \{0\} \,, \\
    \gN = \ & \left\{u \in C_0^M \mid Du = 0\right\} \,.
\end{align*}
It is easy to see that the intersection between any two classes is $0$. 
% Furthermore, following a similar argument presented in \cite{candogan2011flows}, one can show that two games that differ by a non-strategic part have the same set of Nash equilibrium. 

We first discuss the games that differ by only a non-strategic subgame. Then, we discuss the games with no near vector potential parts or no scalar potential parts in detail. 

As implied by its name, the non-strategic games are a class of subgames that do not affect the set of equilibria. Formally, we define the strategic equivalence of differential games. 
\begin{defn}
    Let $\gG$ and $\gG^\prime$ be two different differential games that only differ in the utility functions. Let $\{u\}_{i=1}^{M}$ be the utility functions of $\gG$ and $\{u^\prime\}_{i=1}^{M}$ be the utility functions of $\gG^\prime$. $\gG$ and $\gG^\prime$ are said to be strategically equivalent, if for any $i \in [M]$, 
    \begin{align*}
        u_i(\omega_i^\prime, \omega_{-i}) - u_i(\omega_i, \omega_{-i}) = u_i^\prime(\omega_i^\prime, \omega_{-i}) - u_i^\prime(\omega_i, \omega_{-i}) \,,
    \end{align*}
    for any strategy $\omega_i, \omega_i^\prime, \omega_{-i}$. 
\end{defn}
Strategically equivalent games have the same payoff comparisons per player, and hence have the same set of Nash equilibria. The non-strategic set of $\gN$ helps characterize strategically equivalent games through the following lemma. 
\begin{lem}
    Two differential games are strategically equivalent if their difference is a non-strategic game. 
\end{lem}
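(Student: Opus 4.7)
The plan is to reduce the statement to the simple observation that, for the difference $v_i = u_i - u_i'$, the condition $Dv = 0$ (non-strategic) means precisely that $\nabla_{\omega_i} v_i = 0$ for every player $i$, and hence $v_i$ does not depend on $\omega_i$. Once we have that, the strategic equivalence equality follows by rearrangement.

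First, I would set $v_i := u_i - u_i'$ for each $i \in [M]$ and collect them into $v = (v_1,\dots,v_M) \in C_0^M$ (this uses that $C_0^M$ is a vector space, which follows from Definition \ref{defn:C0} and the fact that $C^\infty(\reals^n)$ is closed under differences). By assumption the game with utilities $v$ lies in $\gN$, so by the defining condition of $\gN$ we have
\begin{equation*}
Dv = (\nabla_{\omega_1} v_1,\ \ldots,\ \nabla_{\omega_M} v_M) = 0,
\end{equation*}
i.e.\ $\nabla_{\omega_i} v_i \equiv 0$ for every $i$.

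Next, since each $v_i$ lies in $C^\infty(\reals^n)$ and its gradient with respect to the block $\omega_i \in \reals^{n_i}$ vanishes identically, I would apply the fundamental theorem of calculus along the straight-line path $t \mapsto (1-t)\omega_i + t\omega_i'$ in $\reals^{n_i}$ (which is path-connected) to conclude that for every fixed $\omega_{-i}$,
\begin{equation*}
v_i(\omega_i', \omega_{-i}) - v_i(\omega_i, \omega_{-i}) = \int_0^1 \langle \nabla_{\omega_i} v_i((1-t)\omega_i + t\omega_i', \omega_{-i}),\ \omega_i' - \omega_i \rangle\, dt = 0.
\end{equation*}

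Finally, substituting $v_i = u_i - u_i'$ and rearranging yields
\begin{equation*}
u_i(\omega_i', \omega_{-i}) - u_i(\omega_i, \omega_{-i}) = u_i'(\omega_i', \omega_{-i}) - u_i'(\omega_i, \omega_{-i}),
\end{equation*}
which is exactly the definition of strategic equivalence. There is essentially no obstacle here: the only small subtlety is invoking smoothness of $v_i$ (rather than only a weak-derivative argument) so that the pointwise equation $\nabla_{\omega_i} v_i = 0$ can be integrated along a line, and this is handed to us directly by the fact that $C_0^M$ is defined using $C_0 \cap C^\infty(\reals^n)$.
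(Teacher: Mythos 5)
Your proof is correct. The paper itself omits this proof, deferring to \cite{candogan2011flows,legacci2024geometric}, and your argument -- reducing $Du=Du'$ on the non-strategic difference to $\nabla_{\omega_i}v_i\equiv 0$ and integrating along the segment in the $\omega_i$-block -- is exactly the standard argument those references use, correctly adapted to the smooth continuous-strategy setting (including the observation that smoothness of $v_i$ lets the weak-derivative condition be used pointwise).
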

\begin{proof}
    The proof is almost identical to that of \cite{candogan2011flows,legacci2024geometric} and hence we omit it here. 
\end{proof}

\subsection{The scalar potential game}
One of the most popular solution concepts in multi-agent settings is Nash equilibrium, which originates from game theory to describe the behaviors of rational, selfish players \cite{roughgarden2010algorithmic}. It characterizes a stable state between the players, where no individual has any incentive to unilaterally deviate from their chosen strategy.  It is worth noting that the computational complexity of finding a Nash equilibrium in a general game is known to be PPAD-Hard \cite{chen2009settling,daskalakis2013complexity}. Nonetheless, it has been established that the computation of a Nash equilibrium becomes more feasible in specific game contexts, such as potential games \cite{monderer1996potential}, where a potential function is available to quantify how individual strategy changes affect collective utility. A long line of works has developed efficient algorithms that can converge to Nash equilibrium in potential games \cite{cominetti2010payoff,chen2016generalized,heliou2017learning,cuilearning,panageas23,anagnostides2022last}. The formal definition of a potential game is given as follow. 
\begin{defn}[\cite{monderer1996potential}]\label{defn:potential}
A game is a potential game if there is a single potential function $\phi: \mathbb{R}^n \rightarrow \mathbb{R}$ and positive numbers $\left\{\alpha_i>0\right\}_{i=1}^M$ such that
$$
\phi\left(\omega_i^{\prime}, \omega_{-i}\right)-\phi\left(\omega_i^{\prime \prime}, \omega_{-i}\right)=\alpha_i\left(u_i\left(\omega_i^{\prime}, \omega_{-i}\right)-u_i\left(\omega_i^{\prime \prime}, \omega_{-i}\right)\right) \,,
$$
for all $i$ and all $\omega_i^{\prime}, \omega_i^{\prime \prime}, \omega_{-i}$. 
\end{defn}

We now show that if a game is a scalar potential game (without a vector potential part), then it has a utility function classified in $\gP \oplus \gN$ and is an exact potential game.

\begin{lem}\label{lem:exact_then_closed}
    If $X \in \ker(d_2)$, we say it is closed; if $X \in \ima(d_1)$, we say it is exact. Then, every exact $X$ is closed, i.e. $d_2 d_1 = 0$. 
\end{lem}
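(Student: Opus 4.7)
The plan is to directly compute $d_2 d_1 f$ on a generic $f \in C_0$ and show the result vanishes via the symmetry of mixed (weak) second partial derivatives. Since $C_0 = \overline{C^\infty(\reals^n)}^{\|\cdot\|_{H^2}}$, any $f \in C_0$ has weak partials $\partial f / \partial x_i \in H^1(\reals^n)$ for each $i$, so $d_1 f = (\partial f/\partial x_1, \ldots, \partial f/\partial x_n) \in C_1$. Applying $d_2$ as in Definition \ref{defn:operator2} gives, for $i < j$,
\begin{align*}
    (d_2 d_1 f)_{ij} = -\frac{\partial}{\partial x_j}\!\left(\frac{\partial f}{\partial x_i}\right) + \frac{\partial}{\partial x_i}\!\left(\frac{\partial f}{\partial x_j}\right) ,
\end{align*}
and the goal is to show both weak mixed partials agree almost everywhere.

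The key step is the weak version of Clairaut/Schwarz's theorem. For any test function $\phi \in C_c^\infty(\reals^n)$, the classical mixed partials commute: $\partial_i \partial_j \phi = \partial_j \partial_i \phi$. Using the definition of weak derivative twice, for the multiindex $\alpha$ with $\alpha_i = \alpha_j = 1$ and other components zero,
\begin{align*}
    \int_{\reals^n} \frac{\partial}{\partial x_j}\!\left(\frac{\partial f}{\partial x_i}\right) \phi \, dx
    = \int_{\reals^n} f \, \partial_i \partial_j \phi \, dx
    = \int_{\reals^n} f \, \partial_j \partial_i \phi \, dx
    = \int_{\reals^n} \frac{\partial}{\partial x_i}\!\left(\frac{\partial f}{\partial x_j}\right) \phi \, dx .
\end{align*}
Since this holds for every $\phi \in C_c^\infty$, the fundamental lemma of the calculus of variations forces the two weak mixed partials to coincide as elements of $L^2(\reals^n)$, hence $(d_2 d_1 f)_{ij} = 0$ a.e.\ for every pair $i < j$.

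A minor point I would address is that the weak second partials of $f$ a priori exist as $L^2$ functions (by $f \in H^2$), so the above manipulations are rigorously justified; for $f$ only in the $H^2$-closure of $C^\infty$ rather than in $C^\infty$ itself, one can either use the density directly (approximate $f$ by $f_k \in C^\infty$ in $H^2$, apply classical Schwarz to each $f_k$, and pass to the limit in $L^2$) or note that $\overline{C^\infty(\reals^n)}^{\|\cdot\|_{H^2}}$ is contained in $H^2(\reals^n)$ where weak second partials are automatically symmetric. I expect the only mild obstacle is flagging this regularity/density detail; the algebraic content of $d_2 \circ d_1 = 0$ is entirely carried by the commutativity of classical partials on test functions.
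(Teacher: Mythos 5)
Your proof is correct and follows the same route as the paper's: compute $(d_2 d_1 f)_{ij}$ directly and invoke the symmetry of mixed partials. The paper simply asserts $\frac{\partial^2 f}{\partial x_j \partial x_i} = \frac{\partial^2 f}{\partial x_i \partial x_j}$ without comment, whereas you justify it in the weak sense via test functions and density — a worthwhile bit of added rigor given that $C_0$ is an $H^2$-closure, but not a different argument.
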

\begin{proof}
    By the definition of $d_1$, we have $d_1 f = \left(\frac{\partial f}{\partial x_1}, \ldots, \frac{\partial f}{\partial x_n}\right)$. Then applying $d_2$ yields 
    \begin{align*}
        (d_2 d_1 f)_{ij} = - \frac{\partial^2 f}{\partial x_j x_i} + \frac{\partial^2 f}{\partial x_i x_j}  = 0 \,. \tag*{\qedhere}
    \end{align*}
\end{proof}

\begin{lem}[Poincar\'e Lemma]\label{lem:pointcare}
   In $\reals^n$, every smooth closed 1-form is exact with $n>1$. 
\end{lem}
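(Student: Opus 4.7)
The plan is to prove the Poincaré lemma on $\reals^n$ by the standard explicit construction via radial integration, exploiting the fact that $\reals^n$ is star-shaped with respect to the origin. Given a closed smooth $1$-form $X = (f_1, \ldots, f_n) \in C_1$ with $d_2 X = 0$, i.e.\ $\frac{\partial f_i}{\partial x_j} = \frac{\partial f_j}{\partial x_i}$ for all $i < j$, I would define the candidate potential
\begin{align*}
    f(x) := \int_0^1 \sum_{i=1}^n x_i \, f_i(tx) \, dt,
\end{align*}
which is the line integral of $X$ along the straight segment from the origin to $x$. The goal is to show $d_1 f = X$, i.e.\ $\frac{\partial f}{\partial x_k} = f_k(x)$ for each $k$.

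The key calculation proceeds in three steps. First, I would differentiate under the integral sign (justified by the smoothness of $f_i$ and the compactness of $[0,1]$) to obtain
\begin{align*}
    \frac{\partial f}{\partial x_k}(x) = \int_0^1 f_k(tx) \, dt + \int_0^1 t \sum_{i=1}^n x_i \frac{\partial f_i}{\partial x_k}(tx) \, dt.
\end{align*}
Second, I would apply the closedness hypothesis to swap $\frac{\partial f_i}{\partial x_k}$ with $\frac{\partial f_k}{\partial x_i}$ in the second integral, so that $\sum_i x_i \frac{\partial f_k}{\partial x_i}(tx) = \frac{d}{dt} f_k(tx)$ by the chain rule. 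Third, I would integrate by parts on $\int_0^1 t \, \frac{d}{dt} f_k(tx) \, dt = f_k(x) - \int_0^1 f_k(tx) \, dt$, and observe that the leftover term cancels exactly with the first integral above, yielding $\frac{\partial f}{\partial x_k}(x) = f_k(x)$.

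There is essentially no serious obstacle here, since $\reals^n$ being convex (hence star-shaped with respect to $0$) means the straight-line homotopy $(t,x) \mapsto tx$ is globally defined. The only technical checks are routine: differentiation under the integral sign requires local boundedness of the integrand and its derivative, which follows from $f_i \in C^\infty(\reals^n)$; and smoothness of $f$ itself follows from the same. The hypothesis $n > 1$ is merely to indicate the non-trivial case where the closedness condition $d_2 X = 0$ carries content (when $n = 1$ the operator $d_2$ is vacuous and the result is simply the fundamental theorem of calculus applied to an antiderivative of $f_1$). Note that the resulting $f$ need not a priori lie in $C_0$ in the Sobolev sense of Definition~\ref{defn:C0}, but the lemma as stated only asserts exactness as differential forms, so this is not needed for the conclusion.
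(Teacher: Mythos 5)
Your radial-integration argument is the standard and correct proof of the Poincar\'e lemma on a star-shaped domain; the computation (differentiation under the integral, the swap $\partial f_i/\partial x_k = \partial f_k/\partial x_i$ via closedness, and the integration by parts of $\int_0^1 t\,\tfrac{d}{dt}f_k(tx)\,dt$) all checks out. The paper itself states this lemma as a classical fact with no proof, so there is nothing to contrast with; your closing remark is also apt, since the paper's subsequent Lemma~\ref{lem:potential1} only requires $\phi \in C^\infty(\reals^n)$ rather than membership in the Sobolev space $C_0$, which your construction indeed need not deliver.
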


\begin{lem}\label{lem:potential1}
    If $u \in \gP \oplus \gN$, then there exist some $\phi \in C^\infty(\reals^n)$, such that $d_1 \phi = Du$. 
\end{lem}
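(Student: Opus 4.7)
The plan is to unpack the direct-sum membership $u \in \gP \oplus \gN$ into a statement about $Du$ living in $\ker(d_2)$, and then invoke the Poincar\'e Lemma (Lemma \ref{lem:pointcare}) to produce the desired potential. Concretely, I would first write $u = u_\gP + u_\gN$ with $u_\gP \in \gP$ and $u_\gN \in \gN$. By the definition of $\gN$ we have $Du_\gN = 0$, while the definition of $\gP$ gives $Du_\gP \in \ker(d_2)$. Linearity of $D$ (it is just a stacking of partial gradients) then yields $Du = Du_\gP \in \ker(d_2)$, so the vector field $Du$ is closed in the sense of Lemma \ref{lem:exact_then_closed}.

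Next I would verify the regularity needed to apply the Poincar\'e Lemma. Since $u \in C_0^M$ by assumption and the definition of $C_0^M$ forces each $u_m \in C_0 \cap C^\infty(\reals^n)$, the partial derivatives $\nabla_{\omega_m} u_m$ are smooth, so $Du = (\nabla_{\omega_1} u_1, \ldots, \nabla_{\omega_M} u_M)$ is a smooth vector field on $\reals^n$. Identifying $Du$ with the 1-form $\sum_i (Du)_i\, dx_i$, the operator $d_2$ coincides (up to sign) with the exterior derivative on 1-forms, so $d_2 Du = 0$ exactly says that this 1-form is closed in the classical sense.

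Having established that $Du$ corresponds to a smooth closed 1-form on $\reals^n$, Lemma \ref{lem:pointcare} immediately produces a function $\phi \in C^\infty(\reals^n)$ with $d\phi = \sum_i (Du)_i\, dx_i$. Translating back through the definition of $d_1$ (Definition \ref{defn:d_1}), this is precisely the statement $d_1 \phi = Du$, which is what we wanted. The edge case $n = 1$, which is excluded by the hypothesis of Lemma \ref{lem:pointcare}, can be handled separately by noting that any smooth function on $\reals$ trivially has a smooth antiderivative.

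The only delicate point I foresee is the identification of $d_2$ with the exterior derivative $d$ on 1-forms so that Lemma \ref{lem:pointcare} is applicable; this is essentially a bookkeeping check against Definition \ref{defn:operator2}, since $(d_2 X)_{ij} = \partial_i f_j - \partial_j f_i$ for $i<j$ matches the components of $d\omega$ for the 1-form $\omega = \sum_i f_i\, dx_i$. Everything else reduces to routine unpacking of definitions.
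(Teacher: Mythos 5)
Your proposal is correct and follows essentially the same route as the paper's proof: establish that $Du \in \ker(d_2)$ (hence closed), note smoothness from $u \in C_0^M$, and apply the Poincar\'e Lemma to obtain the potential $\phi$. Your additional bookkeeping --- the explicit identification of $d_2$ with the exterior derivative on 1-forms and the $n=1$ edge case --- is a more careful spelling-out of steps the paper leaves implicit, not a different argument.
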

\begin{proof}
    By the definition of $\gP$, $\gN$, we have $Du \in \ker(d_2)$ and is therefore closed. Since $u \in C_0^M$, $Du$ is also smooth. Thus by the application of Lemma \ref{lem:pointcare} and Lemma \ref{lem:exact_then_closed}, we have the result. 
\end{proof}

\begin{lem}\label{lem:potential2}
    If $u \in \gP \oplus \gN$, then the Jacobian metric $$J(\omega) = \left(\begin{array}{cccc}
\nabla_{\omega_1}^2 u_1 & \nabla_{\omega_1, \omega_2}^2 u_1 & \cdots & \nabla_{\omega_1, \omega_n}^2 u_1 \\
\nabla_{\omega_2, \omega_1}^2 u_2 & \nabla_{\omega_2}^2 u_2 & \cdots & \nabla_{\omega_2, \omega_n}^2 u_2 \\
\vdots & & & \vdots \\
\nabla_{\omega_M, \omega_1}^2 u_M & \nabla_{\omega_M, \omega_2}^2 u_M & \cdots & \nabla_{\omega_n}^2 u_M 
\end{array}\right)$$ is symmetric, and $u$ is an exact potential game. 
\end{lem}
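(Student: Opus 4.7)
The starting point is Lemma \ref{lem:potential1}, which gives a smooth function $\phi \in C^\infty(\reals^n)$ with $d_1 \phi = Du$. Unpacking the definitions of $d_1$ and $Du$, this equality reads coordinate-wise as
\begin{align*}
    \frac{\partial \phi}{\partial x_k} = \frac{\partial u_m}{\partial x_k}
\end{align*}
for every index $k$ belonging to player $m$'s strategy block. Thus I obtain the key identity $\nabla_{\omega_m} u_m = \nabla_{\omega_m} \phi$ for each $m \in [M]$, from which both claims of the lemma will follow.

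\textbf{Symmetry of $J(\omega)$.} The $(m, m')$ block of $J$ has entries $\partial^2 u_m / \partial (\omega_m)_k \partial (\omega_{m'})_l$. Using the identity above to replace $u_m$ by $\phi$ in the outer derivative, this entry equals $\partial^2 \phi / \partial (\omega_m)_k \partial (\omega_{m'})_l$. Applying the same substitution with $m$ and $m'$ swapped shows that the $(l,k)$ entry of the $(m',m)$ block equals $\partial^2 \phi / \partial (\omega_{m'})_l \partial (\omega_m)_k$. Since $u \in C_0^M \subset C^\infty$, so is $\phi$ (or at least $C^2$, which suffices), and Clairaut's theorem on mixed partials of $\phi$ gives equality of these two numbers. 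The diagonal blocks $\nabla_{\omega_m}^2 u_m$ are symmetric for the same reason. Hence $J(\omega)^\top = J(\omega)$.

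\textbf{Exact potential property.} It remains to show that $\phi$ is a Monderer-Shapley potential with weights $\alpha_i = 1$. Fix $i$, $\omega_{-i}$, and two strategies $\omega_i', \omega_i''$. Consider the straight-line path $\gamma(t) = \omega_i'' + t(\omega_i' - \omega_i'')$ for $t \in [0,1]$ in player $i$'s strategy space, with the other players' strategies held at $\omega_{-i}$. Applying the fundamental theorem of calculus to both $\phi(\gamma(t), \omega_{-i})$ and $u_i(\gamma(t), \omega_{-i})$ and using $\nabla_{\omega_i} \phi = \nabla_{\omega_i} u_i$ along the path yields
\begin{align*}
    \phi(\omega_i', \omega_{-i}) - \phi(\omega_i'', \omega_{-i})
    &= \int_0^1 \langle \nabla_{\omega_i} \phi(\gamma(t), \omega_{-i}), \omega_i' - \omega_i'' \rangle \, dt \\
    &= \int_0^1 \langle \nabla_{\omega_i} u_i(\gamma(t), \omega_{-i}), \omega_i' - \omega_i'' \rangle \, dt \\
    &= u_i(\omega_i', \omega_{-i}) - u_i(\omega_i'', \omega_{-i}).
\end{align*}
This is exactly Definition \ref{defn:potential} with $\alpha_i = 1$, so $u$ is an exact potential game with potential $\phi$.

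\textbf{Potential obstacle.} No step here is truly hard: the symmetry reduces to Clairaut once we have a scalar potential, and the exact-potential property is a one-line integration. The only subtlety is making sure the identity $d_1 \phi = Du$ contributed by Lemma \ref{lem:potential1} is interpreted coordinate-wise in the right way (matching the $x_k$ coordinates to the player blocks $\omega_m$); this is purely notational. Nothing further is needed about the non-strategic summand, because $DN = 0$ for any $N \in \gN$, so decomposing $u = p + N$ with $p \in \gP$ gives $Du = Dp$ and the argument applies directly.
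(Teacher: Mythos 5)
Your proof is correct and follows the same skeleton as the paper's: both start from Lemma \ref{lem:potential1} to obtain $\phi$ with $d_1\phi = Du$, and both read off symmetry of $J$ from the fact that every block becomes a block of the Hessian of the single function $\phi$ (Clairaut). The one place you diverge is the second claim: the paper disposes of it by citing Lemma 2 and Corollary 3 of \cite{letcher2019differentiable}, which establish the equivalence between a symmetric Jacobian and an exact potential game, whereas you verify Definition \ref{defn:potential} with $\alpha_i=1$ directly by integrating $\nabla_{\omega_i}\phi = \nabla_{\omega_i}u_i$ along a line segment. Your version is self-contained and arguably cleaner, since it produces the potential function explicitly rather than routing through an external equivalence; the paper's version is shorter and situates the result within the existing differential-games literature. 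Two trivial remarks: the substitution $u_m \mapsto \phi$ happens in the \emph{inner} derivative $\partial u_m/\partial(\omega_m)_k = \partial\phi/\partial(\omega_m)_k$ before the cross-derivative in $\omega_{m'}$ is applied (harmless since everything is smooth), and your closing observation that the $\gN$ summand contributes nothing because $Du = Dp$ is exactly why Lemma \ref{lem:potential1} is stated for all of $\gP\oplus\gN$, so it is already covered.
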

\begin{proof}
    By Lemma \ref{lem:potential1}, it is easy to see that the Jacobian matrix is symmetric. Then by Lemma 2 and Corollary 3 of \cite{letcher2019differentiable}, we have the equivalence to an exact potential game.
\end{proof}

While it is long known that the gradient descent dynamic is convergent in scalar potential games, we provide the Figure \ref{fig:potential} of the gradient descent dynamic to contrast with the dynamic in vector potential games (which is discussed in the next section). 
\begin{figure*}[t]
    \centering
    \includegraphics[width=\linewidth]{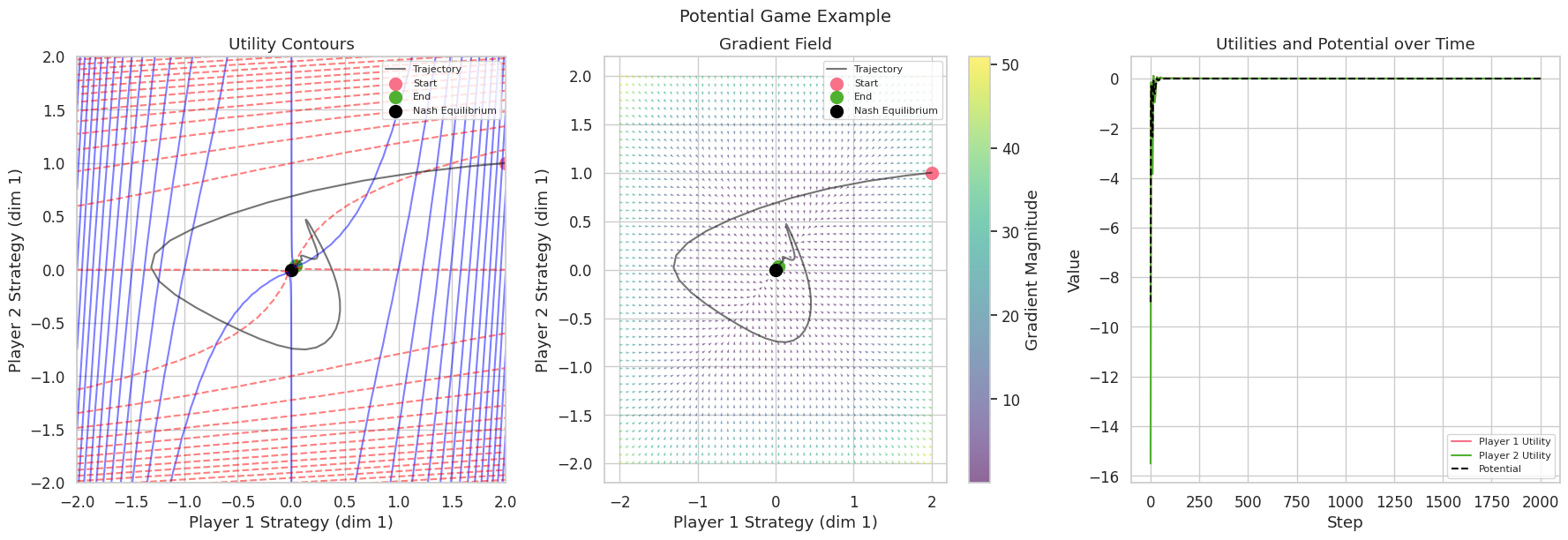}
    \caption{In this two-player potential game, the utility functions are given by $u_1(x,y) = -\frac{1}{2}\|x\|^2 + x^Ty - \|x\|^4$ for player 1 and $u_2(x,y) = -\frac{1}{2}\|y\|^2 + x^Ty - \|y\|^4$ for player 2, where $x,y \in \mathbb{R}^2$. These utilities are derived from the potential function $P(x,y) = -\frac{1}{4}(\|x\|^2 + \|y\|^2) + \frac{1}{2}x^Ty - \frac{1}{4}(\|x\|^4 + \|y\|^4)$. The unique Nash equilibrium is at $(0,0)$.}
    \label{fig:potential}
\end{figure*}
\subsection{The near vector potential game}
Now we turn our attention to the near vector potential game, which is a game with a divergence-free gradient field. We took the name vector potential from physics, where it refers to a field that is divergence-free. The divergence of $X = (g_1, \ldots, g_n)$ is defined as $\Div(X) = \sum^n_{i=1} \frac{\partial g}{\partial x_i}$. 
\begin{lem}\label{lem:div0_ae}
     If $X \in (\ker(d_2))^\perp$, $X = (g_1, \ldots, g_n)$ and each $g_i \in H_c^3(\reals^n) := \overline{C^\infty_c(\reals^n)}^{\|\cdot\|_{H^3}}$, then $\Div(X) = 0$ almost everywhere. 
\end{lem}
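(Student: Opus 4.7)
The approach rests on the structural fact that $\ima(d_1) \subseteq \ker(d_2)$, which is Lemma \ref{lem:exact_then_closed}. Combined with the hypothesis $X \in \ker(d_2)^\perp$, this gives
\begin{align*}
    \langle X, d_1 \phi\rangle_{1} = 0 \qquad \text{for every } \phi \in C_0,
\end{align*}
and in particular for every $\phi \in C_c^\infty(\reals^n) \subset C_0$. My plan is to unfold this identity using the definition of $\langle\cdot,\cdot\rangle_{1}$, integrate by parts to move all derivatives off of $X$ and onto $\phi$, and read off a distributional equation that pins down $\Div(X)$.

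Writing $X = (g_1,\dots,g_n)$ and $d_1\phi = (\partial_1\phi,\dots,\partial_n\phi)$, the identity expands to
\begin{align*}
    \sum_{i=1}^n \int_{\reals^n} g_i\, \partial_i\phi\, dx + \sum_{i,j=1}^n \int_{\reals^n} \partial_j g_i\, \partial_i\partial_j\phi\, dx = 0.
\end{align*}
The first sum integrates by parts once (this is just the definition of weak derivative) to $-\int \Div(X)\,\phi\, dx$. For the second sum I integrate by parts twice, once in $x_j$ and once in $x_i$, turning the integrand into $\partial_i\partial_j^2 g_i \cdot \phi$; summing over $i,j$ and swapping the order of partial derivatives collapses the double sum into $\int \Delta\Div(X)\,\phi\, dx$. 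The hypothesis $g_i \in H_c^3(\reals^n)$ is exactly what guarantees that the required third-order weak derivatives of $g_i$ lie in $L^2$, while the compact support of $\phi \in C_c^\infty$ disposes of all boundary contributions. Combining the two sums yields the distributional identity
\begin{align*}
    \int_{\reals^n} \bigl[\Delta \Div(X) - \Div(X)\bigr]\, \phi\, dx = 0 \qquad \text{for all } \phi \in C_c^\infty(\reals^n),
\end{align*}
i.e.\ $(\Delta - I)\Div(X) = 0$ in the sense of distributions.

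To finish, set $h := \Div(X)$. Since each $g_i \in H^3(\reals^n)$, each $\partial_i g_i \in H^2 \subseteq L^2$, so $h \in L^2(\reals^n)$. Taking Fourier transforms in the distributional equation gives $-(1 + |\xi|^2)\,\hat{h}(\xi) = 0$ almost everywhere; because $1 + |\xi|^2 > 0$, this forces $\hat{h} = 0$ a.e., and hence $h = 0$ a.e.\ by Plancherel.

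\textbf{Main obstacle.} The most delicate point is the double integration by parts on $\int \partial_j g_i\, \partial_i \partial_j \phi\, dx$: moving both second-order derivatives off $\phi$ demands a genuine third-order weak derivative of $g_i$ in $L^2$, which is precisely the role of the hypothesis $g_i \in H_c^3$. A secondary but lighter subtlety is extracting the pointwise a.e.\ conclusion from the distributional equation $(\Delta-I)h = 0$; this rests on the non-vanishing of the Fourier symbol $1+|\xi|^2$ together with $h \in L^2$.
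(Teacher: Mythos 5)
Your proof is correct, and the setup is the same as the paper's: both arguments start from $\ima(d_1)\subseteq\ker(d_2)$ (Lemma \ref{lem:exact_then_closed}), expand the orthogonality relation $\langle d_1 f, X\rangle_1=0$ using Definition \ref{defn:C1}, and integrate by parts. The two proofs diverge only in the endgame. The paper integrates by parts just \emph{once} in each term, which turns the identity into $\int_{\reals^n} f\,\Div(X)\,dx+\sum_j\int_{\reals^n}\partial_j f\,\partial_j\Div(X)\,dx=0$, i.e.\ the $H^1$ inner product of $f$ with $\Div(X)$ vanishes for all admissible $f$; it then substitutes $f=\Div(X)$ itself (legitimate because $g_i\in H^3_c$ forces $\Div(X)\in H^2_c$) and reads off $\|\Div(X)\|_{H^1}=0$. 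You instead integrate by parts a second time to land on the distributional equation $(\Delta-I)\Div(X)=0$ and kill it with the Fourier transform via the nonvanishing symbol $1+|\xi|^2$. Both finishes are sound and both consume the $H^3_c$ hypothesis in the same place (existence of the relevant second/third-order weak derivatives of $g_i$ in $L^2$); the paper's variational substitution is slightly more elementary and avoids Plancherel, while your route makes explicit the elliptic equation $\Div(X)=\Delta\Div(X)$ that the paper alludes to in the remark after Lemma \ref{lem:divDU_0} for the non-compactly-supported case, so it arguably illuminates why the compact-support assumption is needed there.
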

\begin{proof}
    For any $f \in H_c^2(\reals^n)$, we have $d_2 d_1 f = 0$, then $d_1 f \in \ker(d_2)$. Hence, by the definition of inner products on $C_1$, we have
    \begin{align*}
        0 = \ & \langle d_1 f, X \rangle_{_{1}} \\
        = \ & \sum^n_{i=1} \int_{\reals^n} \frac{\partial f}{\partial x_i} \cdot g_i dx + \sum^n_{i=1}\sum^n_{j=1} \int_{\reals^n}  \frac{\partial^2f}{\partial x_j \partial x_i} \cdot \frac{\partial g_i}{\partial x_j } dx \\
        = \ & - \left(\sum^n_{i=1} \int_{\reals^n} f \cdot \frac{\partial g_i }{\partial x_i} dx  + \sum^n_{i=1}\sum^n_{j=1} \int_{\reals^n}  \frac{\partial f}{\partial x_j } \cdot \frac{\partial^2 g_i}{\partial x_j \partial x_i} dx \right)\\
        = \ &  - \left(\int_{\reals^n} f \cdot \Div(X)  + \sum^n_{j=1} \int_{\reals^n}  \frac{\partial f}{\partial x_j } \cdot \frac{\partial\Div(X)}{\partial x_j } dx \right)\,.
    \end{align*}
    As $\Div(X) \in H_c^2(\reals^n)$, substituting it with $f$, we have $\Div(X) = 0$ almost everywhere. 
\end{proof}

\begin{lem}\label{lem:divDU_0}
    For $u \in \gV \oplus \gN$, and each $u_i \in C_c^\infty(\reals^n), \forall i \in [M]$, then $\Div(Du) = 0$. 
\end{lem}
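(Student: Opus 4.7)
The plan is to reduce this lemma directly to Lemma \ref{lem:div0_ae} together with a regularity argument. Writing $u = u^{\gV} + u^{\gN}$ with $u^\gV \in \gV$ and $u^\gN \in \gN$, linearity of $D$ gives $Du = Du^\gV + Du^\gN = Du^\gV$, since by definition $Du^\gN = 0$. Thus $Du \in \ker(d_2)^\perp$, and Lemma \ref{lem:div0_ae} applies provided each component of $Du$ lies in $H_c^3(\reals^n)$.

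To verify that hypothesis, I would note that the components of $Du$ are of the form $\partial u_m / \partial x_k$ for appropriate $m, k$. Since $u_m \in C_c^\infty(\reals^n)$, each such partial derivative is again smooth and compactly supported (the support is contained in the support of $u_m$). In particular it lies in $C_c^\infty(\reals^n) \subset H_c^3(\reals^n)$, so the hypothesis of Lemma \ref{lem:div0_ae} is met. Applying that lemma yields $\Div(Du) = 0$ almost everywhere on $\reals^n$.

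Finally, to upgrade the almost-everywhere statement to a pointwise statement, I would use regularity: $\Div(Du) = \sum_{m=1}^{M} \sum_{k} \partial^2 u_m / \partial x_k^2$ is a finite sum of second partial derivatives of $C^\infty$ functions, hence itself a continuous (indeed smooth) function. A continuous function that vanishes almost everywhere must vanish everywhere, so $\Div(Du) \equiv 0$.

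I do not anticipate a serious obstacle here; the lemma is essentially a specialization of Lemma \ref{lem:div0_ae} under the stronger smoothness and compact support assumption on each $u_i$. The only mildly delicate point is keeping the function spaces straight ($C_c^\infty \subset H_c^3$, and derivatives of $C_c^\infty$ functions remain in $C_c^\infty$), and then invoking continuity of $\Div(Du)$ to pass from ``a.e.'' to ``everywhere.''
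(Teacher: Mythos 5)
Your proposal is correct and follows essentially the same route as the paper, which likewise reduces the claim to Lemma \ref{lem:div0_ae} via the definitions of $\gV$ and $\gN$. Your additional details --- checking that the components $\partial u_m/\partial x_k$ lie in $C_c^\infty(\reals^n)\subset H_c^3(\reals^n)$ and using continuity of $\Div(Du)$ to upgrade ``almost everywhere'' to ``everywhere'' --- are exactly the steps the paper's one-line proof leaves implicit.
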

\begin{proof}
    The result follows from the definition of $\gV$, $\gN$, and Lemma \ref{lem:div0_ae}. 
\end{proof}

we refer to the class of games with divergence-free individual gradient vector fields as vector potential games. However, we formally term the games in $\gV \oplus \gN $ as near vector potential games, as we have Lemma \ref{lem:divDU_0} under the assumption that each $u_i \in C_c^\infty(\reals^n)$. In the general case, when each $u_i$ does not have compact support, we only have the guarantee of 
\begin{align*}
    \Div(Du) + \sum^n_{j=1} \frac{\partial^2 \Div(Du)}{\partial^2 x_j} = 0\,.
\end{align*}
%Hence we call $\gV$ the class of near vector potential game. We 
Note that when $u$ is compactly supported, i.e. each $u_i$ is compactly supported, then we can have $\Div(Du) = 0$, which represents an exact vector potential game. In Section \ref{sec:alterdecom}, we introduce a different decomposition scheme that gives an exact vector potential game.

While the learning dynamics are well understood in scalar potential games, their behaviors remain unclear in vector potential games. We consider the dynamical system defined by ascending according to the direction of the gradients, i.e. 
\begin{align}\label{eq:GD}
    \dot{x} = Du \,.
\end{align}

We first provide an example where gradient descent dynamics is orbiting in a game with zero divergence of the gradient vector field. 
\begin{example}\label{example:orbit}
    Consider a two-player game where player $1$ plays strategy $x \in \reals$ and player $2$ plays strategy $y \in \reals$. The utility functions are given as 
    \begin{align*}
        u_1(x,y) = xy \,, \quad u_2(x,y) = -xy - x^3 y\,.
    \end{align*}
    One can check that the gradient is $Du = (y, -x-x^3)$ and $\Div(Du) = 0$. Moreover, the orbits of the gradient descent are solutions to the curves
    \begin{align*}
        \frac{y^2}{2} + \frac{x^2}{2} + \frac{x^4}{4} = c^2 \,,
    \end{align*}
    where $c$ is a constant. 
\end{example}
Figure \ref{fig:orbit} visually illustrates the orbiting behaviors of gradient descent in a two-player game described Example \ref{example:orbit}. 
\begin{figure*}[t]
    \centering
    \includegraphics[width=\linewidth]{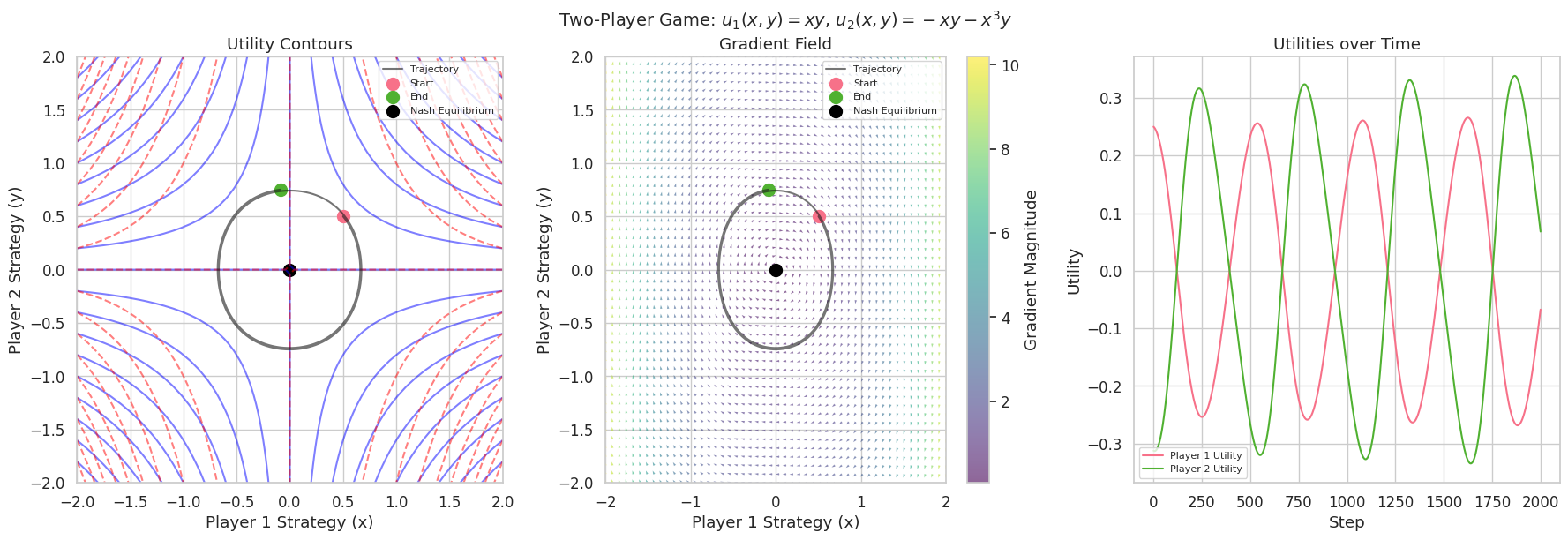}
    \caption{The dynamics of gradient descent, changes of utility over time and gradient vector field of the game presented in Example \ref{example:orbit}.}
    \label{fig:orbit}
\end{figure*}

To investigate the behavior of this dynamic system described above. We start with some definitions of flows on $\reals^n$. We refer to \cite{lee2012smooth} for a general theory and \cite{legacci2024geometric} (Appendix A) for a more comprehensive introduction.

Consider a smooth vector $X$ on $\reals^n$, a smooth global integral curve of $X$ is a smooth curve $x: \reals \rightarrow \reals^n$ such that $\dot{x}(t) = X(x(t)), \forall t \in \reals$. If a smooth global integral $x$ with starting point $y$ exists, then it is the unique maximal solution of 
\begin{align*}
    x(0) = y \,, \quad \dot{x}(t) = X(x(t))\,. 
\end{align*}
A smooth global flow on $\reals^n$ is a smooth map $\theta: \reals \times \reals^n \rightarrow \reals^n$ such that $\forall t, s \in \reals$ and $x \in \reals^n$, $\theta(0, x) = x$ and $\theta(t, \theta(s, x)) = \theta(t+s, x)$. Fix some $t \in \reals$, the orbit map $\theta_t: \reals^n \rightarrow \reals^n$ is $\theta_t(x) = \theta(t,x)$. Fix $x \in \reals$, the curve $\theta^x: \reals \rightarrow \reals^n$ denotes $\theta^x(t) = \theta(t, x)$. 
% Given a smooth global flow $\theta$ and a smooth vector field $X$, $\theta$

For any open set $\gU \subseteq \reals^n$ and $t \in \reals$, $\gU_t$ is defined to be image of $\gU$ under the orbit map $\theta_t$, i.e. $$\gU_t = \theta_t(\gU) = \{\theta_t(x): x \in \gU\} \subseteq \reals^n\,.$$ 
\begin{thm}[Euclidean Liouville's theorem]\label{thm:liouville}
Given a smooth vector field $X$ in $\reals^n$ and an open set $\gU \subseteq \reals^n$, $$\frac{d}{dt} \Vol(\gU_t) = \int_{\gU_t} \Div(X) dx \,,$$
$\forall t \in \reals$ such that the flow of $X$ is defined. 
\end{thm}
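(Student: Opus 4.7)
The plan is to reduce the claim to the classical Jacobi/Liouville identity for the Jacobian determinant of the flow. I would write $J(t,x) := \det D_x \theta_t(x)$, where $D_x \theta_t$ denotes the spatial Jacobian of the orbit map at time $t$. Since $\theta_0 = \mathrm{id}$ we have $J(0,x) = 1$, and by continuity and the fact that each $\theta_t$ is a diffeomorphism, $J(t,x) > 0$ on its domain. The change-of-variables theorem then gives
$$\Vol(\gU_t) = \int_{\gU_t} dy = \int_{\gU} J(t,x)\, dx,$$
which converts the derivative of a $t$-dependent domain integral into a derivative under a fixed-domain integral.

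Next I would establish the Liouville/Jacobi identity
$$\frac{\partial}{\partial t} J(t,x) = \Div(X)(\theta_t(x)) \cdot J(t,x).$$
This comes from interchanging $\partial_t$ and $\partial_x$ in the defining relation $\dot{\theta}_t(x) = X(\theta_t(x))$, yielding that the matrix $A(t,x) := D_x \theta_t(x)$ satisfies the linear ODE $\dot{A}(t,x) = DX(\theta_t(x))\, A(t,x)$ with $A(0,x) = I$. Then Jacobi's formula $\tfrac{d}{dt} \det A = \det(A)\, \mathrm{tr}(A^{-1} \dot{A})$ combined with the cyclic property of the trace and $\mathrm{tr}(DX) = \Div(X)$ gives the claimed pointwise identity.

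Finally I would differentiate under the integral sign (valid by smoothness of the flow; in the stated generality one can exhaust $\gU$ by relatively compact subsets where $J$ and its $t$-derivative are uniformly bounded over small time intervals, and pass to the limit) and then revert the change of variables:
$$\frac{d}{dt} \Vol(\gU_t) = \int_{\gU} \Div(X)(\theta_t(x))\, J(t,x)\, dx = \int_{\gU_t} \Div(X)(y)\, dy,$$
which is the desired formula.

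The main obstacle is the Liouville/Jacobi identity, both in justifying that one may interchange the order of $t$- and $x$-differentiation applied to $\theta$ (which requires the $C^2$ regularity supplied by smoothness of $X$) and in carefully applying Jacobi's determinant formula to the resulting matrix ODE. The differentiation under the integral sign and the change-of-variables steps are routine once $\gU$ is assumed bounded, or handled by an exhaustion argument otherwise.
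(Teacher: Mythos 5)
The paper does not actually prove this theorem: it is quoted as a classical result (Euclidean Liouville's theorem), with the reader pointed to \cite{lee2012smooth} and Appendix A of \cite{legacci2024geometric} for background, so there is no in-paper argument to compare yours against. Your proof is the standard and correct derivation: pull the moving-domain integral back to $\gU$ via the change of variables $y = \theta_t(x)$, establish the Jacobi/Liouville identity $\partial_t J(t,x) = \Div(X)(\theta_t(x))\,J(t,x)$ from the variational equation $\dot A = DX(\theta_t(x))A$ together with Jacobi's determinant formula, differentiate under the integral sign, and change variables back. The regularity points you flag (interchanging $\partial_t$ with $D_x$, which follows from smooth dependence of the flow on initial conditions, and invertibility of $A$ so that $\det A > 0$) are exactly the ones that need attention, and your treatment of them is adequate. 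The only caveat worth stating explicitly is that the formula is meaningful only when $\Vol(\gU_t) < \infty$; your exhaustion-by-relatively-compact-subsets remark covers the justification of differentiating under the integral, but you should note that finiteness of the volume is implicitly assumed in the statement itself (in the paper's application $\gU$ sits inside a compact invariant set, so this is harmless).
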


We say a map $\phi: \reals^n \rightarrow \reals^n$ is volume preserving if $\Vol(\gU) = \Vol(\phi \gU)$, $\forall \gU \subseteq \reals^n$. An important implication of Liouville's theorem is that orbit maps of vector fields with zero divergence are volume-preserving. 
\begin{cor}
    If a vector field $X$ in $\reals^n$ has $\Div(X) = 0$, then $\Vol(\gU_t) = \Vol(\gU), \forall \gU \subset \reals^n, \forall t \in \reals$ such that the flow of X is defined 
\end{cor}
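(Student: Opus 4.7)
The plan is to invoke the Euclidean Liouville theorem (Theorem \ref{thm:liouville}) and observe that the right-hand side of its identity collapses to zero under the hypothesis $\Div(X) = 0$. Since the corollary is framed as an immediate consequence of the preceding theorem, no new machinery beyond a direct substitution and an integration of the resulting ODE is required.

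Concretely, I would fix an open set $\gU \subseteq \reals^n$ and define the real-valued function $\Phi(t) = \Vol(\gU_t)$ on the time interval for which the flow of $X$ is defined. By Theorem \ref{thm:liouville}, $\Phi$ is differentiable with $\Phi'(t) = \int_{\gU_t} \Div(X)\, dx$. The hypothesis $\Div(X) \equiv 0$ makes the integrand vanish identically, so $\Phi'(t) = 0$ throughout the domain of existence.

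Next, I would use the defining property $\theta(0, x) = x$ of a smooth global flow, which says that $\theta_0$ is the identity on $\reals^n$. Hence $\gU_0 = \theta_0(\gU) = \gU$ and $\Phi(0) = \Vol(\gU)$. Since $\Phi$ has identically zero derivative on a (connected) time interval containing $0$, it is constant, and therefore $\Vol(\gU_t) = \Phi(t) = \Phi(0) = \Vol(\gU)$ for every admissible $t$, which is exactly the stated conclusion.

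There is no real obstacle here: the corollary is a book-keeping consequence of Liouville's theorem. The only conceivable subtlety is the regularity of $\Phi$ needed to pass from $\Phi' \equiv 0$ to $\Phi$ constant, but this is already absorbed into the statement of Theorem \ref{thm:liouville}, which guarantees both the existence of $\tfrac{d}{dt}\Vol(\gU_t)$ and its representation as the divergence integral; continuous differentiability of $\Phi$ therefore comes for free, and the constancy argument is rigorous.
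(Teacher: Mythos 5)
Your proposal is correct and matches the paper's (implicit) argument: the paper states this corollary without proof as an immediate consequence of Theorem \ref{thm:liouville}, and your substitution of $\Div(X)=0$ into the volume-derivative identity, followed by integrating $\Phi'(t)=0$ from the initial condition $\gU_0=\gU$, is exactly the intended reasoning. No gaps.
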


To discuss the behaviors of the dynamical system defined in $X$, we maintain the following assumption of the dynamical system. We first remark that this assumption is required from an ergodic theory perspective \cite{bekka2000ergodic}, and then provide an example to illustrate the necessity of it.
\begin{asmp}[Bounded orbit]\label{asmp:bounded_orbit}
    There exists a compact set $\Omega$ such that for any $x_0 \in \Omega$, $\theta_{t}(x_0) \in \Omega$, $\forall t \geq 0$.
\end{asmp}
We now explain the necessity and implication of this assumption. Consider the case that there exists some $t_1$ such that $\theta_{t_1}(x_0) \notin \Omega$, then we consider a set $\Omega_1$, $\Omega \subseteq \Omega_1$, $2\Vol(\Omega) \leq \Vol(\Omega_1)$, $\theta_{t_1} (x_0) \in \Omega_1$. If there exists some $t_2 > t_1$, $\theta_{t_2}(x_0) \notin \Omega_1$. Then we consider some set $\Omega_2$, $\Omega_1 \subseteq \Omega_2$, $2\Vol(\Omega_1) \leq \Vol(\Omega_2)$, $\theta_{t_2} (x_0) \in \Omega_2$. By repeatedly applying this argument, we can see that the trajectory $x(t)$ induced by $x_0$ can go to infinity, and the dynamical system can be considered divergent. 

Without this assumption, a simple game where the trajectory of gradient descent goes to infinity is a two-player game with $Du = (y,x)$. We visually illustrate the dynamic in this game in Figure \ref{fig:infinite}. 
\begin{figure*}[t]
    \centering
    \includegraphics[width=\linewidth]{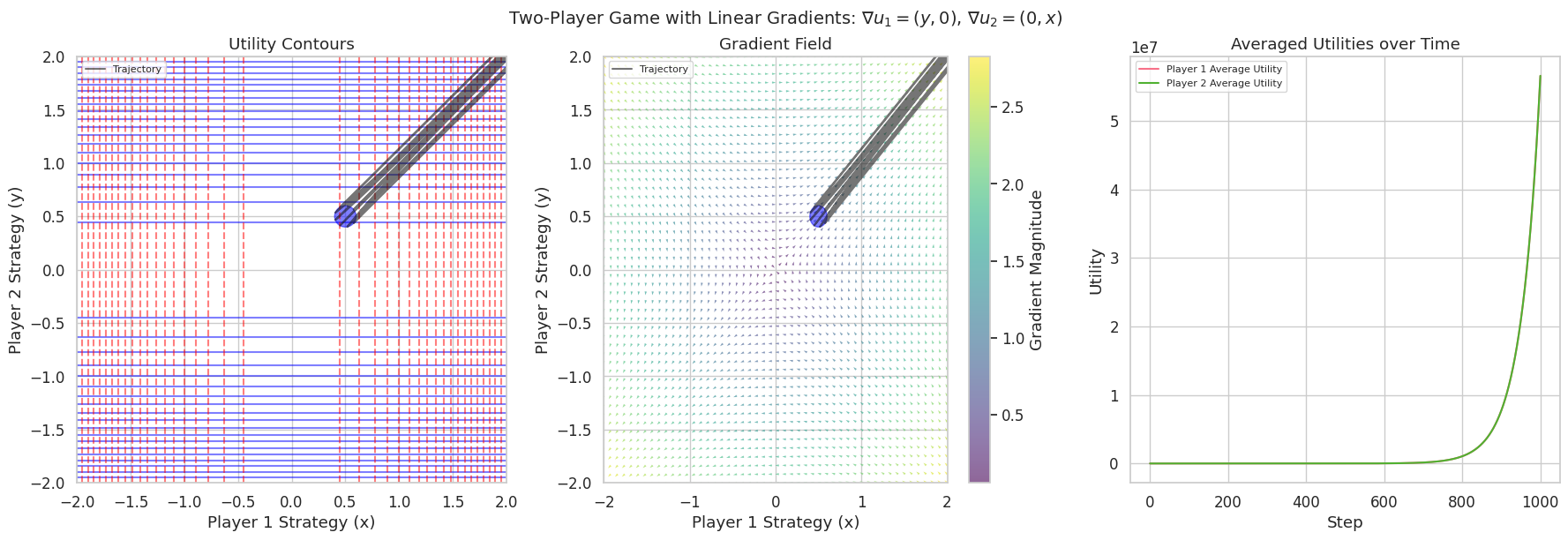}
    \caption{The dynamics of gradient descent, changes of utility over time, and gradient vector field of a two-player game with utility gradients $Du = (y,x)$. We plotted the trajectory that started from a set of initial points. We can see that the volume of the set remains unchanged throughout the trajectory. }
    \label{fig:infinite} 
\end{figure*}

\begin{lem}
    If $Du$ is compactly supported on $\Omega$, then Assumption \ref{asmp:bounded_orbit} is satisfied.
\end{lem}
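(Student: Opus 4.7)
The plan is to take $\Omega := \mathrm{supp}(Du)$, which is compact by hypothesis, and to show that this $\Omega$ is forward-invariant under the flow $\dot{x} = Du(x)$. Since each $u_i$ lies in $C^\infty(\reals^n)$ and $Du$ has compact support, $Du$ is a smooth, bounded vector field; hence the flow $\theta_t$ is globally defined for all $t \geq 0$, and $Du$ is locally Lipschitz, so the Picard--Lindel\"of uniqueness theorem applies.

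The key preliminary fact I would establish is that $Du$ vanishes on the topological boundary $\partial \Omega$. This is immediate from the definition of the support: $Du \equiv 0$ on the open complement $\reals^n \setminus \Omega$, and by continuity of $Du$ this zero value persists on the closure of that complement, which contains $\partial \Omega$.

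Forward invariance is then argued by a standard first-exit-time contradiction. Fix $x_0 \in \Omega$ and define $t^* := \sup\{t \geq 0 : \theta_s(x_0) \in \Omega \text{ for every } s \in [0, t]\}$. Assuming $t^* < \infty$, continuity of the flow gives $\theta_{t^*}(x_0) \in \Omega$; moreover this point must lie in $\partial \Omega$, since an interior limit would allow the trajectory to remain in $\Omega$ for times slightly beyond $t^*$, contradicting maximality. By the boundary-vanishing step, $Du(\theta_{t^*}(x_0)) = 0$, so the constant curve $y(t) \equiv \theta_{t^*}(x_0)$ is a solution of $\dot y = Du(y)$ through this point; uniqueness then forces $\theta_t(x_0) = \theta_{t^*}(x_0) \in \Omega$ for all $t \geq t^*$, contradicting the definition of $t^*$. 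Hence $t^* = \infty$, which is exactly Assumption \ref{asmp:bounded_orbit}.

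The argument is short, and the only mildly delicate step is the boundary-vanishing claim: one must argue it via continuity applied to the open complement $\reals^n \setminus \mathrm{supp}(Du)$, rather than by incorrectly invoking vanishing on the boundary of $\{Du \neq 0\}$. Once the vector field is known to vanish on $\partial \Omega$, invariance follows routinely from ODE uniqueness applied at a zero of the vector field.
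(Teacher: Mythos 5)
Your proof is correct, and it reaches the conclusion by a genuinely different mechanism than the paper's, even though both arguments share the same first-exit-time scaffolding. The paper argues by contradiction on the excursion \emph{after} the trajectory leaves $\Omega$: it takes the last time $\tilde{t}\le t_1$ at which the orbit is in $\Omega$, applies the mean value theorem on $(\tilde{t},t_1)$ to produce a time $t_2$ where $\dot\theta^x(t_2)=Du(\theta^x(t_2))\neq 0$, and contradicts this with the fact that the orbit is outside the support there, so $Du$ must vanish. Your argument instead never lets the trajectory leave: you observe that $Du$ vanishes on $\partial(\mathrm{supp}(Du))$ by continuity from the open complement, so any would-be first exit point is an equilibrium of the (locally Lipschitz, hence Picard--Lindel\"of) flow, and uniqueness pins the orbit there forever. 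Your route buys two things: it sidesteps the vector-valued mean value theorem (which, as stated in the paper, does not hold componentwise at a single point and needs a minor repair), and it yields the slightly stronger conclusion that $\mathrm{supp}(Du)$ itself is forward-invariant, not just some compact superset. The paper's route is marginally more elementary in that it only uses that $Du=0$ off the support, without invoking uniqueness of solutions. Both correctly use closedness of $\Omega$ and continuity of the flow to place the critical point in $\Omega$ (respectively on $\partial\Omega$), so there is no gap in your argument.
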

\begin{proof}
    Suppose that the claim does not hold, then there exists an $x \in \Omega$, $t_1 > 0$, $x_1 = \theta^x(t_1) = \theta_{t_1}(x) \not\in \Omega$, where $\Omega$ is a compact set. Let $\tilde{t} = \sup\{t \mid \theta^x(t) \in \Omega, t \leq t_1\}$, $\tilde{x} = \theta^x(\tilde{t})$. Since $\theta^x(t)$ is continuous, we have $\tilde{x} \in \Omega$. By the mean value theorem, there exists a $t_2 \in (\tilde{t}, t_1)$ such that $Du(\theta^x(t_2)) = \frac{d \theta^x}{dt}(t_2) = \frac{x_1 - \tilde{x}}{t_1 - \tilde{t}} \neq 0$. By the definition of $\tilde{t}$, $\theta^x(t_2) \not\in \Omega$. Then we have $Du(\theta^x(t_2)) = 0$, which indicates that the claim holds by proof by contradiction. 
\end{proof}

Given a measure space $(\Omega, \mu)$, we say that $(\Omega, \mu)$ is finite if $\mu(\Omega)<\infty$, and that a map $\phi: \Omega \rightarrow \Omega$ is measure preserving if $\mu(\phi \mathcal{U})=\mu(\mathcal{U})$ for all measurable subsets $\mathcal{U} \subseteq \Omega$. 
\begin{thm}[Poincar\'e - Measure setting \cite{bekka2000ergodic}]\label{thm:poincare}
    Let $(\Omega, \mu)$ be a finite measure space, and let $\phi: \Omega \rightarrow \Omega$ be a measure preserving mapping. Let $\mathcal{U}$ be a measurable subset of $\Omega$. Then almost every point $x \in \mathcal{U}$ is infinitely recurrent with respect to $\mathcal{U}$, that is, the set $\left\{n \in \mathbb{N}: \phi^n x \in \mathcal{U}\right\}$ is infinite.
\end{thm}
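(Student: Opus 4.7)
The plan is to use the classical Poincar\'e recurrence argument: express the non-recurrent part of $\mathcal{U}$ as a countable union of sets whose preimages under iterates of $\phi$ are pairwise disjoint, and then use finiteness of $\mu$ to conclude each such set is a null set.

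Concretely, I would let $N \subseteq \mathcal{U}$ denote the set of points in $\mathcal{U}$ that are \emph{not} infinitely recurrent, i.e.\ those $x$ for which $\{n \in \mathbb{N}: \phi^n x \in \mathcal{U}\}$ is finite. For each integer $k \geq 1$, I introduce
\[
B_k \;=\; \{x \in \mathcal{U} : \phi^n x \notin \mathcal{U} \text{ for all } n \geq k\}.
\]
Any non-recurrent $x$ has a last return time to $\mathcal{U}$, so $N = \bigcup_{k \geq 1} B_k$, and it suffices to prove $\mu(B_k) = 0$ for each $k$.

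Next I would study the preimages $\phi^{-jk}(B_k)$ for $j = 0, 1, 2, \ldots$ and verify that they are pairwise disjoint. If some $y$ were in both $\phi^{-ik}(B_k)$ and $\phi^{-jk}(B_k)$ with $i < j$, then $\phi^{ik}y \in B_k$, and the defining property of $B_k$ applied with $n = (j-i)k \geq k$ yields $\phi^{(j-i)k}(\phi^{ik}y) = \phi^{jk}y \notin \mathcal{U}$; but $\phi^{jk}y \in B_k \subseteq \mathcal{U}$, a contradiction. Because $\phi$ is measure preserving, $\mu(\phi^{-jk}(B_k)) = \mu(B_k)$ for every $j$, and disjointness together with $\mu(\Omega) < \infty$ forces $\mu(B_k) = 0$. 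A countable union over $k$ then gives $\mu(N) = 0$, proving that $\mu$-almost every $x \in \mathcal{U}$ returns to $\mathcal{U}$ infinitely often.

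The main subtlety I expect is picking the correct spacing between iterates in the disjointness argument. One cannot naively take consecutive preimages $\phi^{-j}(B_k)$, because the defining property of $B_k$ only forbids returns after $k$ steps; spacing the preimages by exactly $k$ is what makes the "no return after time $k$" condition apply to the cross-term $\phi^{(j-i)k}$. Once this spacing is chosen correctly, the rest is routine bookkeeping enabled by $\mu(\Omega) < \infty$. No appeal to the ergodic structure of the specific flow from \eqref{eq:GD} is needed here; Liouville's theorem and Assumption~\ref{asmp:bounded_orbit} only enter when one instantiates $\phi = \theta_t$ and $\mu = $ Lebesgue measure restricted to $\Omega$ in the downstream applications to $\gV$ and $\tilde{\gV}$.
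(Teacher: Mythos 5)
Your argument is correct and is the standard textbook proof of Poincar\'e recurrence; the paper itself does not prove Theorem \ref{thm:poincare} but imports it verbatim from \cite{bekka2000ergodic}, so there is no in-paper proof to compare against. Your decomposition $N=\bigcup_{k\ge 1}B_k$, the spacing of the preimages by exactly $k$ to make the ``no return after time $k$'' property apply to the cross-term, and the finiteness of $\mu(\Omega)$ forcing $\mu(B_k)=0$ are all exactly as in the classical argument. One small point worth flagging: the paper defines ``measure preserving'' via forward images, $\mu(\phi\,\mathcal{U})=\mu(\mathcal{U})$, whereas your disjointness argument needs $\mu(\phi^{-jk}(B_k))=\mu(B_k)$, i.e.\ invariance under preimages; the two coincide here because in every downstream application $\phi=\theta_t$ is an invertible orbit map of a flow, but for a general non-invertible $\phi$ the forward-image condition alone would not suffice and the preimage formulation is the right hypothesis.
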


Under our assumption, we obtain the following Lemma. 
\begin{lem}
    Let $(\Omega, \mu)$ be a Lebesgue measure space. Then, under Assumption \ref{asmp:bounded_orbit}, $\theta_t$ is measure-preserving. 
\end{lem}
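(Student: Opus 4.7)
The plan is to reduce the measure-preserving claim to the divergence-free corollary of Liouville's theorem. Concretely, the combination of Theorem \ref{thm:liouville} and Lemma \ref{lem:divDU_0} immediately yields volume invariance of the orbit map, and since the Lebesgue measure agrees (up to its extension to the Lebesgue $\sigma$-algebra) with the volume measure on $\reals^n$, this delivers the desired measure-preservation of $\theta_t$.

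In detail, I would first observe that by Assumption \ref{asmp:bounded_orbit}, for any measurable $\gU \subseteq \Omega$ the image $\gU_t = \theta_t(\gU)$ remains inside the compact invariant set $\Omega$ for all $t \geq 0$, so $\mu(\gU_t) < \infty$ and every quantity below is well-defined. Next, since $u \in \gV \oplus \gN$ (and the $u_i$ are compactly supported, which is the same hypothesis used in the preceding lemma to guarantee Assumption \ref{asmp:bounded_orbit}), Lemma \ref{lem:divDU_0} gives $\Div(Du) = 0$ identically on $\reals^n$. Plugging this into Theorem \ref{thm:liouville} applied to $X = Du$ yields
\begin{align*}
\frac{d}{dt}\mu(\gU_t) = \int_{\gU_t} \Div(Du)\, dx = 0,
\end{align*}
so integrating in $t$ gives $\mu(\theta_t(\gU)) = \mu(\gU_0) = \mu(\gU)$ first for every open $\gU \subseteq \Omega$, and then for arbitrary Lebesgue-measurable $\gU \subseteq \Omega$ by outer regularity of Lebesgue measure. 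That is precisely the measure-preserving identity.

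The main subtlety I anticipate is the compact-support hypothesis on the $u_i$. In the general near-vector-potential regime, Lemma \ref{lem:div0_ae} only delivers the weaker elliptic identity $\Div(Du) + \sum_j \partial^2 \Div(Du)/\partial x_j^2 = 0$, which does not force the divergence to vanish pointwise and hence is not enough to make the Liouville integrand zero. Fortunately, the compactly supported instance formalized in Lemma \ref{lem:divDU_0} is exactly the regime in which the previous lemma also supplies Assumption \ref{asmp:bounded_orbit}, so the two hypotheses dovetail and the argument closes without any additional assumption. If one wished to handle the purely near-vector-potential case, an extra step would be needed to convert the weaker PDE identity into almost-everywhere vanishing of $\Div(Du)$ on the invariant region $\Omega$, for instance by exploiting the uniqueness theory for the operator $I - \Delta$ on suitable function spaces, but this is not required for the statement as posed.
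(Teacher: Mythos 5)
Your proof is correct and follows essentially the same route as the paper's one-line argument: use Assumption \ref{asmp:bounded_orbit} to keep $\gU_t$ inside $\Omega$, then apply Liouville's theorem with a divergence-free field to conclude $\mu(\gU_t)=\mu(\gU)$. You are in fact more careful than the paper, which never explicitly invokes $\Div(Du)=0$ in its proof even though the lemma statement omits that hypothesis and Liouville's theorem alone does not give measure preservation without it; your identification of this implicit assumption (and the remark on the regularity extension to measurable sets) is a genuine improvement rather than a deviation.
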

\begin{proof}
    By Assumption \ref{asmp:bounded_orbit}, $\forall \gU \in \Omega$, $\tilde{\gU}_t = \gU_t \cap \Omega = \gU_t$. Then, by Theorem \ref{thm:liouville} $\mu(\tilde{\gU}_t) = \mu(\gU)$. 
\end{proof}

\begin{thm}\label{thm:poincare_near_vector}
    If a game is near vector potential, i.e. $Du \in \gV \oplus \gN$, and $u$ is compactly supported on $\Omega$, i.e. each $u_i$ is compactly supported, $i \in [M]$, then the system defined in Equation (\ref{eq:GD}) is a poincar\'e recurrent. Specifically, for almost every initialization $x(0) \in \Omega$, the induced trajectory $x(t)$ returns arbitrarily close to $x(0)$ infinitely often.
\end{thm}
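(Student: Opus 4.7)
The proof plan is to verify the hypotheses of the Poincaré recurrence theorem (Theorem \ref{thm:poincare}) for the time-one flow map $\theta_1$ on the compact set $\Omega$, and then upgrade the discrete set-wise recurrence it delivers into the continuous, arbitrarily-close topological recurrence claimed in the theorem statement. The work therefore reduces to stitching together three earlier results from the excerpt: Lemma \ref{lem:divDU_0} (to secure $\Div(Du) = 0$), the compact-support-implies-bounded-orbit lemma (to secure Assumption \ref{asmp:bounded_orbit}), and the bounded-orbit-implies-measure-preserving lemma (to secure that $\theta_t$ preserves Lebesgue measure on $\Omega$).

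First I would observe that since each $u_i$ is compactly supported, $Du$ is compactly supported on $\Omega$, so the preceding lemma gives Assumption \ref{asmp:bounded_orbit}, i.e. every orbit started in $\Omega$ remains in $\Omega$ for all $t \geq 0$. Simultaneously, $u \in \gV \oplus \gN$ together with the compact support of each $u_i$ lets me invoke Lemma \ref{lem:divDU_0} to conclude $\Div(Du) = 0$. Feeding both facts into the subsequent lemma yields that $\theta_t$ is a measure-preserving map on the finite Lebesgue measure space $(\Omega, \mu)$, finiteness being guaranteed by compactness of $\Omega$.

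To convert Theorem \ref{thm:poincare} into the stated pointwise assertion, I would fix a countable basis $\{B_k\}_{k \in \mathbb{N}}$ for the relative topology of $\Omega$ (for instance, intersections with $\Omega$ of open balls centered at rational points with rational radii). For each $k$, applying Theorem \ref{thm:poincare} with $\phi = \theta_1$ and $\mathcal{U} = B_k$ produces a null set $E_k \subseteq B_k$ such that every $x \in B_k \setminus E_k$ satisfies $\theta_n(x) \in B_k$ for infinitely many $n \in \mathbb{N}$. The countable union $E = \bigcup_k E_k$ remains null. For any $x(0) \in \Omega \setminus E$ and any $\epsilon > 0$, I choose a basis element $B_k$ with $x(0) \in B_k$ and $\mathrm{diam}(B_k) < \epsilon$; the trajectory $t \mapsto \theta_t(x(0))$ re-enters $B_k$ at infinitely many integer times, and every such visit is within $\epsilon$ of $x(0)$.

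The only step requiring any care is the last one: Theorem \ref{thm:poincare} gives recurrence into a fixed set, not into shrinking neighborhoods of the initial point, and does so only along integer times. The countable-basis device is what bridges this gap — it works precisely because $\reals^n$ (and hence the subspace $\Omega$) is second countable, so the union of the countably many exceptional null sets is still null, and passing from integer to real times is automatic since the integer-time returns are a subset of the continuous-time trajectory. Everything else in the argument is a direct invocation of the earlier lemmas.
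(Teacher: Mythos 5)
Your proposal follows essentially the same route as the paper's own proof: compact support of each $u_i$ gives compact support of $Du$ and hence Assumption \ref{asmp:bounded_orbit}, Lemma \ref{lem:divDU_0} gives $\Div(Du)=0$, and Theorem \ref{thm:liouville} plus Theorem \ref{thm:poincare} deliver recurrence. The countable-basis argument you add to upgrade set-wise recurrence of $\theta_1$ into the pointwise ``arbitrarily close infinitely often'' statement is a correct, standard refinement that the paper's one-line proof leaves implicit.
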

\begin{proof}
    If $u$ has compact support and $u \in \gV \oplus \gN$, so does $Du$. Then $Du$ satisfies Assumption \ref{asmp:bounded_orbit}. By Lemma \ref{lem:divDU_0} and Theorem \ref{thm:liouville}, \ref{thm:poincare}, the statement follows. 
\end{proof}

\subsection{Connection to harmonic and Hamiltonian games}
To illustrate the relationship between the vector potential games and the Hamiltonian games, we first provide an example of a vector potential game that is not Hamiltonian. 
\begin{example}[Vector Potential games may not be Hamiltonian]
    Consider a two-player game where player $1$ plays strategy $x \in \reals$ and player $2$ plays strategy $y \in \reals$. The utility functions are given as 
    \begin{align*}
        u_1 (x, y) = x^2 \,, \quad u_2(x,y) = -y^2 \,.
    \end{align*}
    In this case, the gradient field is $Du = (2x, -2y)$ and therefore $\Div(Du) = 0$, so the game is a vector potential game. However, the Jacobian matrix $\begin{pmatrix}
2 & 0 \\
0 & -2 
\end{pmatrix}$ is not skew-symmetric and is therefore not Hamiltonian. 
\end{example}
However, we provide the following lemma to show that Hamiltonian games are vector potential games. 
\begin{lem}
    A Hamiltonian game is a vector potential game.
\end{lem}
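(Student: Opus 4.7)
The plan is to unpack both definitions and observe that divergence of $Du$ is precisely the trace of the game Jacobian, which must vanish for any skew-symmetric matrix. So the whole argument reduces to a one-line linear-algebra observation combined with the formula for divergence in coordinates.

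More concretely, I would proceed as follows. First, recall from the related-works discussion and from Lemma \ref{lem:potential2} that a Hamiltonian game is defined via the Jacobian $J(\omega)$ of the simultaneous gradient $Du = (g_1, \ldots, g_n)$, whose $(i,j)$ entry is $\partial g_i / \partial x_j$, being skew-symmetric for every $\omega$. Second, skew-symmetry of $J(\omega)$ forces all of its diagonal entries to vanish, since $J_{ii}(\omega) = -J_{ii}(\omega)$ implies $J_{ii}(\omega) = 0$. Third, by the coordinate formula for divergence introduced in the paragraph preceding Lemma \ref{lem:div0_ae},
\begin{equation*}
\Div(Du) = \sum_{i=1}^{n} \frac{\partial g_i}{\partial x_i} = \sum_{i=1}^{n} J_{ii}(\omega) = 0,
\end{equation*}
so $Du$ is divergence-free, which is exactly the defining property of a vector potential game in the informal sense used just above the lemma.

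I do not foresee a real obstacle here; the argument is essentially that the divergence is the trace of the Jacobian, and any skew-symmetric matrix has zero trace. The only subtlety worth flagging is terminological: the paper uses "vector potential game" informally for any game whose simultaneous gradient is divergence-free, while the formal classes are $\gV$ (near vector potential) and $\tilde{\gV}$ (exact vector potential), and the argument above directly yields the informal property that the lemma states. If one wished to place the resulting $Du$ in $\gV \oplus \gN$ under the compact-support hypothesis of Lemma \ref{lem:divDU_0}, no further work is needed beyond invoking that lemma in the converse direction, but the statement as given only asks for the divergence-free conclusion.
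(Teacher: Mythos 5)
Your argument is correct and is essentially identical to the paper's proof: skew-symmetry of the Jacobian forces its diagonal entries to vanish, and the divergence of $Du$ is exactly the sum of those diagonal entries. Your version just spells out the trace observation and the terminological caveat more explicitly.
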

\begin{proof}
    If a game with utility $u$ is Hamiltonian, then the diagonal of the Jacobian matrix is $0$. This implies $\Div(Du) = 0$ by the definition of divergence. 
\end{proof}

For a finite game, through Helmholtz decomposition, it can be decomposed into a potential part, which is the same as our scalar potential part, and a harmonic part. In differential games, however, the harmonic part does not exist. This is because of the fact harmonic component is isomorphic to the de Rham cohomology of the manifold, which is zero when the differential $k$-form is with $k = 1$ and the manifold is $\reals^n$. Despite the absence of harmonic games, the vector potential game of the differential games exhibits properties similar to those of harmonic games, such as the divergence-free gradient vector field. In fact, this is the key property that leads to the chaotic behavior of learning algorithms in harmonic games. 

From the decomposition point of view, the relationship between the vector potential game and the harmonic game can be illustrated by Figure \ref{fig:decomposition}. 
\begin{figure}
    \centering
    \includegraphics[width=0.6\linewidth]{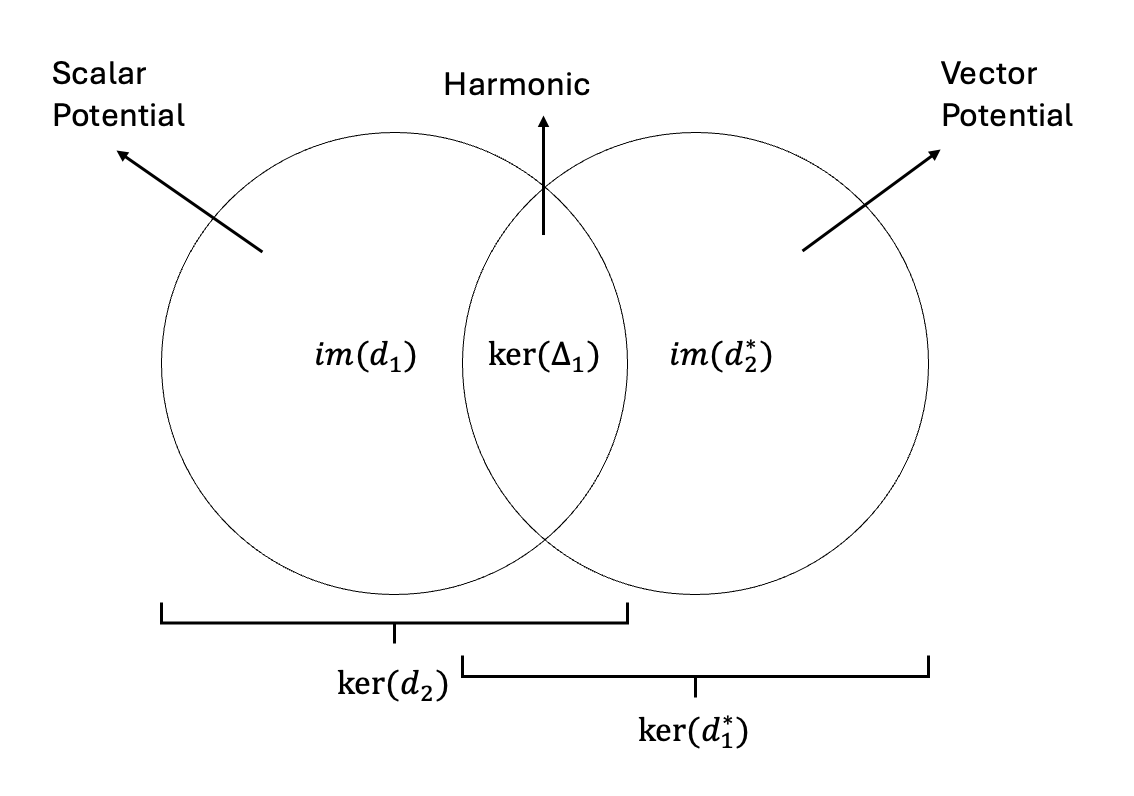}
    \caption{Decomposition of games. The vector Laplacian is defined as $\Delta_1 = d_2^* \circ d_2+d_1 \circ d_1^*,$}
    \label{fig:decomposition}
\end{figure}
In differential games on $\reals^n$, the harmonic part is 
\begin{align*}
    \ker(\Delta_1) = \ker(d_2) \cap \ker(d_1^\ast) = 0\,,
\end{align*}
as $\ima(d_1) =  \ker(d_2)$.

%\paragraph{Auxiliary Lemmas}

\subsection{Alternative decomposition}
\label{sec:alterdecom}
In the remarks following Lemma \ref{lem:divDU_0}, we mentioned that the decomposition presented so far only yields a direct sum decomposition of an exact scalar potential game and a near vector potential game. To bridge the gap between near and exact vector potential games, we operated under the assumption that the utility function is compactly supported to show the non-convergent behaviors in the vector potential game. In this section, we provide an alternative decomposition, which gives a decomposition of a near scalar potential game and an exact vector potential game. As we decompose to an exact vector potential game, we can show the non-convergent behaviors in the vector potential game without additional assumptions. One can choose either one, out of our two decompositions, according to their specific game. 

To obtain this decomposition, we define a slightly different $C_1$. Specifically, we use the following $\tilde{C}_1$. 
\begin{defn}\label{defn:tildeC1}
    The Hilbert space $\tilde{C}_1$ is defined as $$\tilde{C}_1 = \left\{X = (f_1, \ldots, f_n) \mid f_i \in L^2(\reals^n)\right\}\,.$$ The inner product is defined as 
    \begin{align*}
        \langle X, Y \rangle_{_{1}} = \sum^n_{i=1} \int_{\reals^n} f_i \cdot g_i dx\,. 
    \end{align*}
    The norm is thus $\|X\|_{_{1}} = \left[ \langle X, X \rangle_1\right]^{1/2}$.
\end{defn}
We then define the operator that maps between $C_0$ and $\tilde{C}_1$ as $\tilde{d}_1$. 
\begin{defn}\label{defn:tilde_d_1}
    The operator $\tilde{d}_1: C_0 \rightarrow \tilde{C}_1$ is defined as follows. For $f \in C_0$, $$\tilde{d}_1 f = \left(\frac{\partial f}{\partial x_1}, \ldots, \frac{\partial f}{\partial x_n}\right)\,.$$
\end{defn}

In this decomposition, we drop the condition that the utility has to be compactly supported. Instead, we assume that each $u_m$ is in the following class of $\tilde{C}_0^M$, where $$\tilde{C}_0^M = \{u = (u_1, \ldots, u_M) \mid u_i \in C_0, \forall i\}\,.$$ Using the same definition of operate $D$ and $D_m$, we obtain the following lemma. 
\begin{lem}\label{lem:Du_in_tilde_C1}
For any $m \in [M]$, $Du \in \tilde{C}_1$ and $(D u)_m \in L^2(\reals^n)$.
\end{lem}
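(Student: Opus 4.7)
The plan is to essentially mirror the proof of Lemma \ref{lem:Du_in_C1}, observing that the requirement for membership in $\tilde{C}_1$ is strictly weaker than that for $C_1$: we only need each component to lie in $L^2(\reals^n)$ rather than in $H^1(\reals^n)$. So the present lemma should follow from a subset of the bounds already extracted from $u_m \in C_0$, and no new machinery is needed.

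Concretely, I would first unpack the hypothesis: since $u_m \in C_0$, the definition of the $\|\cdot\|_0$-norm in Definition \ref{defn:C0} gives
\begin{align*}
\|u_m\|_0^2 = \int_{\reals^n} |u_m|^2\,dx + \sum_{i=1}^n \int_{\reals^n}\left|\frac{\partial u_m}{\partial x_i}\right|^2 dx + \sum_{i,j=1}^n \int_{\reals^n}\left|\frac{\partial^2 u_m}{\partial x_i \partial x_j}\right|^2 dx < \infty,
\end{align*}
so in particular every weak first-order partial derivative $\partial u_m / \partial x_i$ is an element of $L^2(\reals^n)$.

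Next, I would observe that $(Du)_m = \nabla_{\omega_m} u_m$ is a vector whose components are precisely the partial derivatives $\partial u_m / \partial x_i$ with $x_i$ ranging over the coordinates of $\omega_m$, each of which is in $L^2(\reals^n)$ by the previous step. Hence $(Du)_m \in L^2(\reals^n)$ in the componentwise sense. Since this holds for every $m \in [M]$, the full simultaneous gradient $Du = (\nabla_{\omega_1} u_1, \ldots, \nabla_{\omega_M} u_M)$ is an $n$-tuple of $L^2(\reals^n)$ functions, and so $Du \in \tilde{C}_1$ by Definition \ref{defn:tildeC1}.

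There is essentially no substantive obstacle here; the only subtlety is keeping straight that $\tilde{C}_1$ drops the first-order weak derivative condition imposed in $C_1$, which is why the hypothesis $u_m \in C_0$ (without requiring compact support, and without needing the third-order integrability that Lemma \ref{lem:div0_ae} required) already suffices. The proof is therefore little more than a direct definition-chase from Definition \ref{defn:C0} and Definition \ref{defn:tildeC1}.
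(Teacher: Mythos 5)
Your proof is correct and follows exactly the route the paper intends: the paper simply says the argument is similar to that of Lemma \ref{lem:Du_in_C1} and omits it, and your write-up is precisely that argument spelled out, using the finiteness of the first-derivative terms in $\|u_m\|_0$ to place each component of $(Du)_m$ in $L^2(\reals^n)$.
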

\begin{proof}
The proof is similar to that of \ref{lem:Du_in_C1}, which we omit here. 
\end{proof}

Equipped with these results, we show the second decomposition.
\begin{figure*}[htb]
    \centering
    \includegraphics[width=\linewidth]{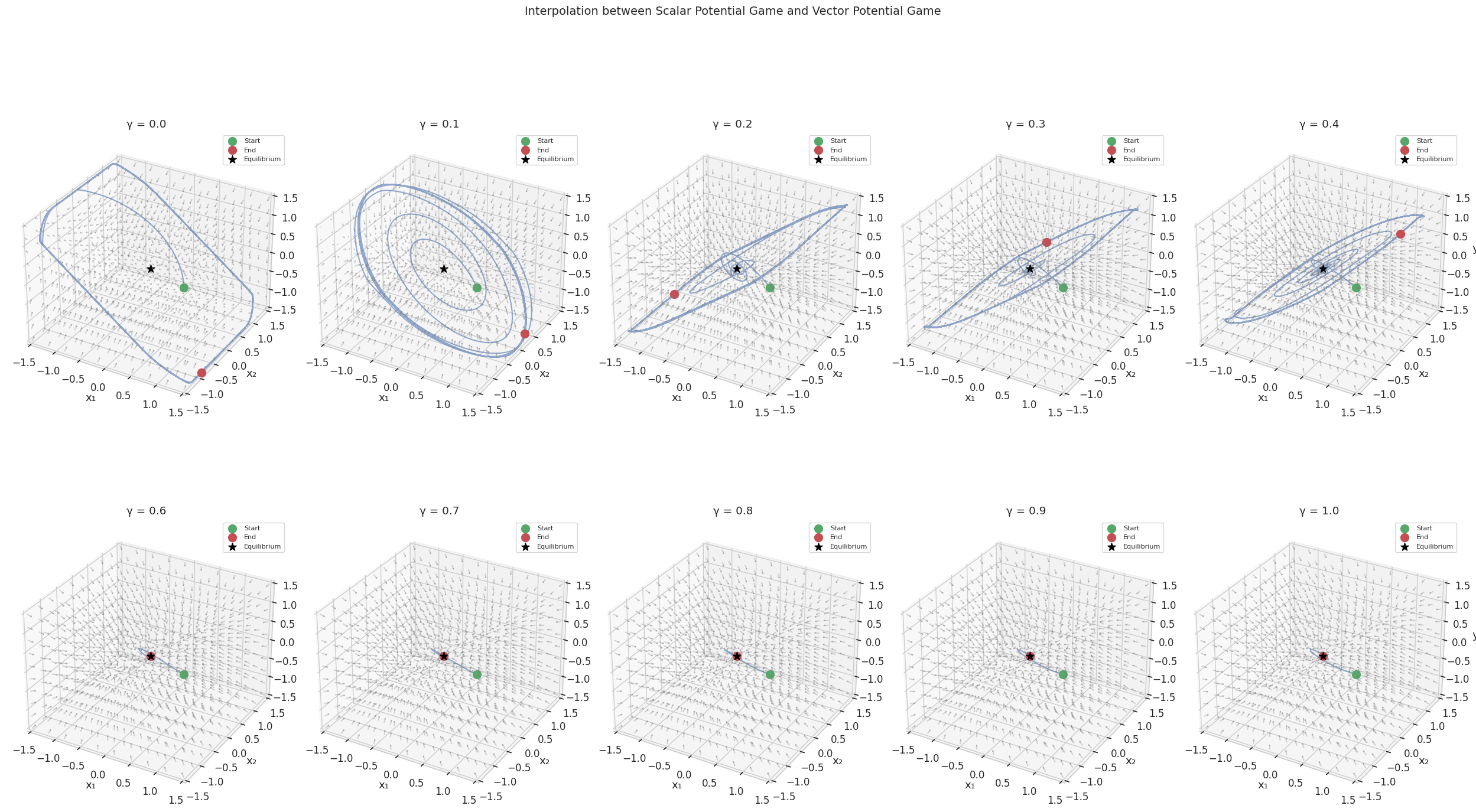}
    \caption{The strategies $x, y \in \reals^2$, $x = (x_1, x_2), y = (y_1, y_2)$ in a spectrum of games by interpolating scalar and vector potential games. The utility functions for the scalar potential game are
$
U_1 = -((x_1 - y_1)^2 + (x_2 - y_2)^2) - 0.2(x_1^2 + x_2^2) \,,
U_2 = -((x_1 - y_1)^2 + (x_2 - y_2)^2) - 0.2(y_1^2 + y_2^2)$, while the utility for the vector potential games are
$
U_1 = x_1 y_2 - x_2 y_1 
U_2 = -(x_1 y_2 - x_2 y_1) - 0.1(x_1^2 y_2^2 + x_2^2 y_1^2)$.}
\end{figure*}

\begin{thm}\label{thm:decompose_C1_tilde}
    $\tilde{C}_1$ can be decomposed as $$\tilde{C}_1 = \overline{\ima(\tilde{d}_1)} \oplus \ima(\tilde{d}_1)^\perp \,.$$ Then $$Du = X_{\tilde{\gP}} + X_{\tilde{\gV}}\,,$$ where $X_{\tilde{\gP}} \in \overline{\ima(\tilde{d}_1)}$, and $X_{\tilde{\gV}} \in \ima(\tilde{d}_1)^\perp$. 
\end{thm}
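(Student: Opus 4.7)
The plan is to reduce the statement to the standard orthogonal decomposition theorem for Hilbert spaces. The key observation is that $\tilde{C}_1$ is isometrically isomorphic to the product Hilbert space $(L^2(\reals^n))^n$ equipped with the natural sum-of-components inner product, so $\tilde{C}_1$ is complete because $L^2(\reals^n)$ is. This verifies that $\tilde{C}_1$ is indeed a Hilbert space, which is already asserted in Definition \ref{defn:tildeC1} but is worth recording explicitly because the whole argument is driven by this fact.

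With that in hand, I would proceed as follows. First, note that $\ima(\tilde{d}_1)$ is a linear subspace of $\tilde{C}_1$, since $\tilde{d}_1$ is linear by inspection of Definition \ref{defn:tilde_d_1}. Its closure $\overline{\ima(\tilde{d}_1)}$ is then a closed linear subspace of the Hilbert space $\tilde{C}_1$. The classical orthogonal decomposition theorem (every closed subspace of a Hilbert space admits an orthogonal complement) immediately yields
\begin{align*}
\tilde{C}_1 = \overline{\ima(\tilde{d}_1)} \oplus \bigl(\overline{\ima(\tilde{d}_1)}\bigr)^\perp.
\end{align*}
It remains to identify $\bigl(\overline{\ima(\tilde{d}_1)}\bigr)^\perp$ with $\ima(\tilde{d}_1)^\perp$. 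This is a general fact: for any subset $V$ of a Hilbert space, orthogonality to $V$ is the same as orthogonality to $\overline{V}$, because the inner product is continuous in each argument, and hence $V^\perp = \overline{V}^\perp$. Applying this with $V = \ima(\tilde{d}_1)$ gives the claimed decomposition.

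Finally, to obtain the decomposition $Du = X_{\tilde{\gP}} + X_{\tilde{\gV}}$ with $X_{\tilde{\gP}} \in \overline{\ima(\tilde{d}_1)}$ and $X_{\tilde{\gV}} \in \ima(\tilde{d}_1)^\perp$, invoke Lemma \ref{lem:Du_in_tilde_C1}, which ensures that $Du \in \tilde{C}_1$. The direct-sum decomposition of $\tilde{C}_1$ then produces the (unique) pair of components, completing the proof. No obstacle of substance arises; the only point requiring a small amount of care is the identification $\bigl(\overline{\ima(\tilde{d}_1)}\bigr)^\perp = \ima(\tilde{d}_1)^\perp$, which is where continuity of the inner product is used, but this is a one-line observation rather than a genuine difficulty. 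The structural parallel with Theorem \ref{thm:decompose_C1} is clear: there we decomposed through $\ker(d_2)$ using that $d_2$ is bounded (so its kernel is already closed), whereas here we decompose through $\ima(\tilde{d}_1)$ and must pass to its closure explicitly because the image of a bounded operator need not be closed.
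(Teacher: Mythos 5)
Your proposal is correct and follows essentially the same route as the paper's proof, which simply invokes the orthogonal decomposition of the Hilbert space $\tilde{C}_1$ with respect to the closed subspace $\overline{\ima(\tilde{d}_1)}$ together with Lemma \ref{lem:Du_in_tilde_C1}. You merely spell out the standard details the paper leaves implicit (completeness of $\tilde{C}_1$ and the identity $\ima(\tilde{d}_1)^\perp = \bigl(\overline{\ima(\tilde{d}_1)}\bigr)^\perp$), which is a faithful elaboration rather than a different argument.
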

\begin{proof}
    The result follows immediately from the decomposition of $\tilde{C}_1$ and Lemma \ref{lem:Du_in_tilde_C1}.
\end{proof}

Now we can define the following classes of subgames, which can be interpreted as the space of the near scalar potential games, the exact vector potential games, and non-strategic games. 
\begin{align*}
    \tilde{\gP} = \ & \left\{u \in \tilde{C}_0^M \mid Du \neq 0\,, Du \in \overline{\ima(\tilde{d}_1)} \right\} \cup \{0\}\,, \\
    \tilde{\gV} = \ & \left\{u \in \tilde{C}_0^M \mid Du \neq 0\,, Du \in \ima(\tilde{d}_1)^\perp\right\}\cup \{0\} \,, \\
    \tilde{\gN} = \ & \left\{u \in C_0^M \mid u \in \ker(D)\right\} \,.
\end{align*}

We say the games in $\tilde{\gP} \oplus \tilde{\gN}$ are near potential games, as we can only find an $\epsilon$-potential function. Specifically, we have the following corollary for the near potential games. 
\begin{cor}
 If $u \in \tilde{\gP} \oplus \tilde{\gN}$, then for any $\epsilon > 0$, there exist some $\phi \in C_0$, such that $\|\tilde{d}_1\phi - Du\|_{_1} \leq \epsilon$.    
\end{cor}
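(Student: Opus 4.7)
The plan is to observe that this corollary follows almost immediately from unpacking the definitions of $\tilde{\gP}$, $\tilde{\gN}$, and the notion of topological closure in the Hilbert space $\tilde{C}_1$. All of the nontrivial content has already been established in Theorem \ref{thm:decompose_C1_tilde}, so the argument is essentially a translation of set-theoretic membership into a quantitative approximation statement.

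First I would use the direct-sum structure to write $u = u_P + u_N$ with $u_P \in \tilde{\gP}$ and $u_N \in \tilde{\gN}$. By the linearity of the operator $D$ (which follows directly from its componentwise definition $Du = (\nabla_{\omega_1}u_1,\ldots,\nabla_{\omega_M}u_M)$), I obtain $Du = Du_P + Du_N$. The definition of $\tilde{\gN}$ gives $u_N \in \ker(D)$, hence $Du_N = 0$ and therefore $Du = Du_P$. By the definition of $\tilde{\gP}$, together with the trivial case $u_P = 0$, we conclude that $Du \in \overline{\ima(\tilde{d}_1)}$.

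Second, I would invoke the standard characterization of closure in a normed space: a point $X$ lies in $\overline{\ima(\tilde{d}_1)}$ if and only if, for every $\epsilon > 0$, there exists some $Y \in \ima(\tilde{d}_1)$ with $\|Y - X\|_{_1} \leq \epsilon$. Applying this to $X = Du$ and writing any such $Y$ in the form $\tilde{d}_1 \phi$ for some $\phi \in C_0$ yields the desired function, proving the corollary.

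There is no real obstacle here, since the corollary is effectively a restatement of the definition of $\overline{\ima(\tilde{d}_1)}$ once we know $Du$ lies in this closed subspace. The only points worth being careful about are the use of the linearity of $D$ when splitting $u$ across the direct sum, and a brief mention that $\ima(\tilde{d}_1) = \{\tilde{d}_1\phi : \phi \in C_0\}$ by the definition of $\tilde{d}_1$ in Definition \ref{defn:tilde_d_1}, so that the approximating element can indeed be written in the required form $\tilde{d}_1\phi$.
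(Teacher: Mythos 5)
Your proposal is correct and matches the intended argument: the paper states this corollary without proof precisely because it is, as you observe, an immediate unpacking of $Du \in \overline{\ima(\tilde d_1)}$ (via linearity of $D$ and $Du_N = 0$ for the non-strategic part) combined with the metric characterization of closure in the Hilbert space $\tilde C_1$. Your care in noting that elements of $\ima(\tilde d_1)$ have the form $\tilde d_1\phi$ with $\phi \in C_0$ is exactly the right level of detail.
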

The learning dynamics of finite near potential games have been extensively investigated in \cite{candogan2010projection,candogan2013dynamics,candogan2013near}. While it is unclear whether the results on the finite near potential game can be immediately extended to differential near potential games, it is reasonable to conjecture that they might behave similarly to that of an exact potential game. We leave it as future work to investigate the dynamics of differential near potential games.

For the exact vector potential game, we show that the divergence of the gradient field is exactly zero. 
\begin{lem}\label{lem:divDU_0_vector}
    If $X \in \ima (\tilde{d}_1)^\perp$, $X = (g_1, \ldots, g_n)$, and $g_i \in H^1(\reals^n)$ for each $i$, then $\Div(X) = 0$ almost everywhere. 
\end{lem}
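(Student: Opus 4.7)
The plan is to mirror the argument for Lemma \ref{lem:div0_ae}, but with a simpler outcome because the inner product on $\tilde{C}_1$ has no derivative term. First I would pick an arbitrary test function $f \in C_c^\infty(\reals^n)$; since $f$ is smooth and compactly supported, it has finite $H^2$ norm and hence lies in $C_0$. Therefore $\tilde{d}_1 f = (\partial f/\partial x_1, \ldots, \partial f/\partial x_n)$ belongs to $\ima(\tilde{d}_1) \subseteq \tilde{C}_1$, and by hypothesis
\begin{align*}
    0 = \langle \tilde{d}_1 f, X \rangle_{_{1}} = \sum^n_{i=1} \int_{\reals^n} \frac{\partial f}{\partial x_i} \cdot g_i \, dx.
\end{align*}

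Next I would use the definition of the weak derivative: since each $g_i \in H^1(\reals^n)$ has a weak partial $\partial g_i / \partial x_i \in L^2(\reals^n)$, and $f \in C_c^\infty(\reals^n)$ is an admissible test function, we get
\begin{align*}
    \sum^n_{i=1} \int_{\reals^n} \frac{\partial f}{\partial x_i} \cdot g_i \, dx
    = -\sum^n_{i=1} \int_{\reals^n} f \cdot \frac{\partial g_i}{\partial x_i} \, dx
    = - \int_{\reals^n} f \cdot \Div(X) \, dx.
\end{align*}
Note that $\Div(X) = \sum_i \partial_i g_i$ is a finite sum of $L^2$ functions, hence $\Div(X) \in L^2(\reals^n)$, so the integral is well defined.

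Finally, the above gives $\int_{\reals^n} f \cdot \Div(X) \, dx = 0$ for every $f \in C_c^\infty(\reals^n)$. Since $C_c^\infty(\reals^n)$ is dense in $L^2(\reals^n)$ and $\Div(X) \in L^2(\reals^n)$, the fundamental lemma of the calculus of variations forces $\Div(X) = 0$ almost everywhere.

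The proof is essentially routine once the setup is in hand; the main subtlety (and contrast with Lemma \ref{lem:div0_ae}) is that the inner product on $\tilde{C}_1$ contains no $H^1$ correction term, so the orthogonality condition produces directly the equation $\int f \Div(X) = 0$ without the auxiliary term $\sum_j \int (\partial_j f)(\partial_j \Div(X))$. Consequently we obtain $\Div(X) = 0$ outright, without needing $X$ to be compactly supported, which is exactly what is promised by the ``exact'' vector potential decomposition. The only bookkeeping point to verify is that $C_c^\infty(\reals^n) \subseteq C_0$, which is immediate from the definition of $C_0$ as an $H^2$-closure of $C^\infty(\reals^n)$.
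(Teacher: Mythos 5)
Your proposal is correct and follows essentially the same route as the paper: test against $\tilde d_1 f$ for $f \in C_c^\infty(\reals^n)$, integrate by parts via the weak-derivative identity to obtain $\int f \cdot \Div(X)\,dx = 0$, and conclude by density of $C_c^\infty(\reals^n)$ in $L^2(\reals^n)$ (the paper packages this last step as Lemma \ref{lem:axuiliary_1}). Your added bookkeeping that $C_c^\infty(\reals^n)\subseteq C_0$ and that $\Div(X)\in L^2(\reals^n)$ is a welcome, if minor, tightening.
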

\begin{proof}
    For any $f \in C_c^\infty(\reals^n)$, as $X \in \ima (\tilde{d}_1)^\perp$, $\langle\tilde{d}_1 f, X \rangle_{_1} = 0$. Then 
    \begin{align*}
        0 
        = \ & \sum^n_{i=1} \int_{\reals^n} \frac{\partial f}{\partial x_i} \cdot g_i dx \\
        = \ & \sum^n_{i=1} \left(-\int_{\reals^n} f \cdot \frac{\partial g_i}{\partial x_i} \right)\\
        =\ & - \int_{\reals^n} f \cdot \Div(X) dx \,.
    \end{align*}
    By Lemma \ref{lem:axuiliary_1}, we have $\Div(X) = 0$ almost everywhere.
\end{proof}

\begin{lem}\label{lem:axuiliary_1}
For any $f \in C_C^{\infty}\left(R^n\right)$, $g \in L^2(\reals^n)$, if $\int_{R^n} f(x) g(x) d x=0 $, then $g(x)=0$ almost everywhere. 
\end{lem}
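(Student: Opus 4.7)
The plan is to exploit the fact that $C_c^\infty(\reals^n)$ is dense in $L^2(\reals^n)$, so that the hypothesis (vanishing of the $L^2$-pairing against every test function) upgrades to vanishing of the pairing against every $L^2$-function, from which taking the pairing against $g$ itself forces $\|g\|_{L^2} = 0$.

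More concretely, first I would note that the hypothesis $\int_{\reals^n} f(x) g(x)\,dx = 0$ for every $f \in C_c^\infty(\reals^n)$ is exactly the statement that $\langle f, g \rangle_{L^2} = 0$ for all $f \in C_c^\infty(\reals^n)$. Next, I would invoke the standard density result that $C_c^\infty(\reals^n)$ is dense in $L^2(\reals^n)$ with respect to the $L^2$-norm (via mollification and truncation, as in \cite{evans1998partial}). Given any $h \in L^2(\reals^n)$, I would pick a sequence $\{f_k\} \subset C_c^\infty(\reals^n)$ with $f_k \to h$ in $L^2$. By continuity of the $L^2$-inner product,
\begin{align*}
\langle h, g \rangle_{L^2} = \lim_{k \to \infty} \langle f_k, g \rangle_{L^2} = 0.
\end{align*}

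Finally, I would specialize to $h = g$ (which is permissible since $g \in L^2(\reals^n)$), obtaining $\|g\|_{L^2}^2 = \langle g, g \rangle_{L^2} = 0$, whence $g = 0$ almost everywhere. The only nontrivial input is the density of $C_c^\infty(\reals^n)$ in $L^2(\reals^n)$, which is classical and can simply be cited; there is no real obstacle in the argument, making this essentially a standard application of density plus Cauchy--Schwarz.
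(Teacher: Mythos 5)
Your proof is correct and follows essentially the same route as the paper: both rely on the density of $C_c^\infty(\reals^n)$ in $L^2(\reals^n)$ together with continuity of the inner product (Cauchy--Schwarz) to conclude $\langle g, g\rangle_{L^2} = 0$. The only cosmetic difference is that the paper approximates $g$ itself directly by test functions, whereas you first establish $\langle h, g\rangle_{L^2} = 0$ for all $h \in L^2$ and then specialize $h = g$.
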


\begin{proof}
    By Corollary 4.2 of Book 2 in \cite{brezis2011functional},  $C_c^{\infty}\left(R^n\right)$ is dense in $L^2\left(R^n\right)$. There exists $g_k \in C_c^{\infty}\left(R^n\right)$ such that $\left\|g_k-g\right\|_{L^2\left(R^n\right)} \longrightarrow 0$. Therefore $\int_{R^n} g_k(x) g(x) d x=0$, and 
\begin{align*}
& \left|\int_{R^n} g(x) \cdot g(x) d x-\int_{R^n} g_k(x) \cdot g(x) d x\right| \\
 \leq \ & \int_{R^n}\left|g_k-g\right| \cdot|g| d x \\
 \leq \ & \left\|g_k-g\right\|_{L^2\left(R^n\right)} \cdot\| g\|_{L^2\left(R^n\right)} \xrightarrow{\text { as } k \rightarrow \infty} 0 \,.
\end{align*}
As $$\int_{R^n}|g(x)|^2 d x=\lim _{k \rightarrow \infty} \int_{R^n} g_k(x) g(x) d x=0\,,$$  we have $g(x)=0$ almost everywhere.
\end{proof}

As one might expect, the dynamical system (\ref{eq:GD}) by gradients ascending exhibits similar behavior in vector potential games as is in near vector potential games. Namely, by the divergence-free result, it is either poincar\'e recurrent or divergent. Notice that, though, in vector potential games it no longer requires a compactly supported utility function. 
%With the divergence-free result, we can show that the exact vector potential games are either poincar\'e recurrent or divergent. This result is similar to the recurrence result in the near vector potential games (Theorem \ref{thm:poincare_near_vector}). However, notice that the following theorem no longer requires a compactly supported utility function. 
\begin{thm}\label{thm:poincare_vector}
    If a game is a vector potential game, i.e. $Du \in \tilde{\gV} \oplus \tilde{\gN}$, and Assumption \ref{asmp:bounded_orbit} is satisfied, Equation (\ref{eq:GD}) is a poincar\'e recurrent. Specifically, for almost every initialization $x(0) \in \Omega$, the induced trajectory $x(t)$ returns arbitrarily close to $x(0)$ infinitely often.
\end{thm}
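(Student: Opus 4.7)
The plan is to mirror the proof of Theorem \ref{thm:poincare_near_vector} but replace the compact-support hypothesis on $u$ with the bounded-orbit assumption, using the stronger divergence-free conclusion available in the alternative decomposition. First I would invoke Lemma \ref{lem:divDU_0_vector}: since $Du \in \tilde{\gV} \oplus \tilde{\gN}$ and each component $(Du)_m$ lies in $H^1(\reals^n)$ by Lemma \ref{lem:Du_in_tilde_C1} (noting the regularity of the utility functions implies enough smoothness on the components), we conclude $\Div(Du) = 0$ almost everywhere. Because $Du$ is smooth whenever $u$ is, the vanishing of the divergence almost everywhere upgrades to vanishing pointwise, so Liouville's theorem (Theorem \ref{thm:liouville}) applies cleanly.

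Next I would use Assumption \ref{asmp:bounded_orbit} to restrict attention to the compact set $\Omega$. Let $\mu$ be Lebesgue measure on $\Omega$; then $\mu(\Omega) < \infty$, so $(\Omega, \mu)$ is a finite measure space. For any measurable $\gU \subseteq \Omega$ and any $t \in \reals$, the bounded-orbit assumption guarantees $\gU_t = \theta_t(\gU) \subseteq \Omega$, so $\theta_t$ restricts to a self-map of $\Omega$. By Theorem \ref{thm:liouville} combined with $\Div(Du) = 0$, we get
\begin{align*}
\frac{d}{dt}\Vol(\gU_t) = \int_{\gU_t} \Div(Du)\, dx = 0,
\end{align*}
so $\mu(\gU_t) = \mu(\gU)$ for all $t$, i.e.\ $\theta_t$ is measure-preserving on $(\Omega, \mu)$.

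Finally, I would apply Poincaré's recurrence theorem (Theorem \ref{thm:poincare}) to each $\theta_t$ (or, as is standard, to a single time-$\tau$ map $\theta_\tau$ for an arbitrary $\tau > 0$, then take a countable intersection over a dense set of $\tau$'s) to conclude that for almost every $x(0) \in \Omega$ and for every open neighborhood $\gU$ of $x(0)$, the trajectory $x(t) = \theta^{x(0)}(t)$ returns to $\gU$ infinitely often. Taking a countable basis of neighborhoods of decreasing diameter and intersecting the resulting full-measure sets yields that for almost every initialization the trajectory returns arbitrarily close to $x(0)$ infinitely often, which is the desired Poincaré recurrence statement.

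The main technical obstacle is the bookkeeping at the first step: Lemma \ref{lem:divDU_0_vector} only delivers vanishing divergence almost everywhere, whereas Liouville's theorem and the ensuing measure-preservation argument are cleanest when the vector field is smooth and divergence-free pointwise. This is resolved by observing that on $\tilde{C}_0^M$ each $u_i$ is smooth, so $\Div(Du)$ is continuous; a continuous function that vanishes almost everywhere vanishes everywhere, so the a.e.\ conclusion is in fact pointwise. Everything else is essentially the same flow-theoretic argument used for Theorem \ref{thm:poincare_near_vector}, with Assumption \ref{asmp:bounded_orbit} playing the role previously played by compact support of $u$ in ensuring that the orbit map stays inside a finite-measure domain.
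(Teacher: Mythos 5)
Your proposal is correct and follows essentially the same route as the paper's own (much terser) proof: Lemma \ref{lem:divDU_0_vector} for the divergence-free property, Liouville's theorem plus Assumption \ref{asmp:bounded_orbit} for measure preservation on the finite measure space $\Omega$, and then Poincar\'e recurrence. Your added bookkeeping (upgrading a.e.\ vanishing of the divergence to pointwise via continuity, and noting that the components of $Du$ land in $H^1$ because each $u_i \in C_0$, which Lemma \ref{lem:Du_in_tilde_C1} alone does not supply) fills in details the paper leaves implicit.
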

\begin{proof}
    By Lemma \ref{lem:divDU_0_vector} and Theorem \ref{thm:liouville}, \ref{thm:poincare}, we have the result. 
\end{proof}
Beyond the recurrence, the Nash equilibrium point is also nonconvex and nonconcave in exact vector potential games. Hence, this property shows new insights into the challenges of finding Nash equilibrium in differential games with second-order optimization methods. 
\begin{lem}
    If $\Div(Du) = 0$, and $p$ is a local Nash equilibrium in $\gU$, i.e. $u_m(p) \geq u_m(q^m, p^{-m})$, $\forall m, q^m \in \gU$, then $\frac{\partial^2 u_m}{\partial^2 \omega_i^m} (p) = 0$. 
\end{lem}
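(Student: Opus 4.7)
The plan is to read off the divergence of $Du$ explicitly in terms of the diagonal entries of each player's own-strategy Hessian, and then to use the second-order necessary conditions at a local Nash equilibrium to squeeze each diagonal entry to zero.

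First I would expand $\Div(Du)$ componentwise. By definition $Du = (\nabla_{\omega_1} u_1, \ldots, \nabla_{\omega_M} u_M)$, and the $m$-th block is indexed by the coordinates $\omega_i^m$ of player $m$. Writing the divergence as the sum of partial derivatives of each component with respect to the corresponding coordinate gives
\begin{align*}
\Div(Du) = \sum_{m=1}^M \sum_{i=1}^{n_m} \frac{\partial^2 u_m}{\partial (\omega_i^m)^2} = \sum_{m=1}^M \operatorname{tr}\bigl(\nabla_{\omega_m}^2 u_m\bigr).
\end{align*}
So the hypothesis $\Div(Du) = 0$ is exactly the pointwise identity $\sum_{m} \operatorname{tr}(\nabla_{\omega_m}^2 u_m)(x) = 0$ for every $x$, and in particular at $p$.

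Next I would invoke the local Nash equilibrium property one player at a time. Since $p$ is a local Nash equilibrium, the map $\omega_m \mapsto u_m(\omega_m, p^{-m})$ attains a local maximum at $\omega_m = p^m$ on $\gU$. The second-order necessary condition for a smooth local maximum in $\reals^{n_m}$ then gives that $\nabla_{\omega_m}^2 u_m(p)$ is negative semidefinite. In particular every diagonal entry satisfies $\frac{\partial^2 u_m}{\partial (\omega_i^m)^2}(p) \le 0$, hence $\operatorname{tr}(\nabla_{\omega_m}^2 u_m(p)) \le 0$ for each $m$.

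Finally I would combine the two observations. The $M$ traces are each nonpositive, and their sum equals $\Div(Du)(p) = 0$, so each trace must be exactly zero. Since a sum of nonpositive numbers vanishes iff each summand vanishes, we get $\frac{\partial^2 u_m}{\partial (\omega_i^m)^2}(p) = 0$ for all $m$ and all $i \in [n_m]$, which is the conclusion. There is really no substantive obstacle here; the only care needed is bookkeeping, namely that the divergence of the product vector field $Du$ on $\reals^n$ decomposes as a sum of traces over the \emph{own-strategy} Hessians (the off-diagonal blocks $\nabla_{\omega_m,\omega_{m'}}^2 u_m$ with $m \ne m'$ never appear in $\Div(Du)$), and that the Nash condition only constrains these own-strategy blocks. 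Once this is observed, the squeeze argument is immediate.
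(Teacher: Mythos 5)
Your proposal is correct and follows essentially the same route as the paper's own proof: expand $\Div(Du)$ as the sum of the diagonal entries of each player's own-strategy Hessian, observe that the local Nash condition forces each such entry to be nonpositive at $p$, and conclude that a vanishing sum of nonpositive terms forces every term to vanish. Your write-up is merely more explicit about the trace/negative-semidefiniteness bookkeeping, which the paper leaves implicit.
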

\begin{proof}
    By the definition, we have $\Div(Du) = \sum^M_{m=1} \sum^n_{i=1}\frac{\partial^2 u_m}{\partial \omega_i^m} = 0$, and each $\frac{\partial^2 u_m}{\partial^2 \omega_i^m} (p) \leq 0$ as $p$ is a local Nash equilibrium. Therefore each $\frac{\partial^2 u_m}{\partial^2 \omega_i^m}$ must be $0$.
\end{proof}

\section{Interpolation of scalar and vector potential games}
Our decomposition method can interpolate the differential games on a spectrum, where the exact scalar and vector potential games are at the two ends. In most of the applications, the differential game considered is a mixture of the two games. While it is hard to obtain theoretical guarantees of the learning dynamics for the games in the middle of the spectrum, we provide the following figure to show the behaviors of gradient descent on the interpolations of scalar and vector potential games through simulation. The utility of the interpolated game is a convex combination of the scalar and vector potential game, $u = \gamma u_{\text{SP}} + (1-\gamma)u_{\text{VP}}$, where $u_{\text{SP}}$ is the utility function of a scalar potential game and $u_{\text{VP}}$ is the utility function of a vector potential game. The parameter $\gamma$ varies from $0$ to $1$, allowing for a smooth transition between the two types of games.

\section{Conclusion and Future Works}
In this work, we have provided two decompositions of differential games through the Hodge/Helmholtz decomposition. In summary, we decompose a differential game into a scalar potential game and a vector potential game. We showed that the scalar potential game is an exact potential game introduced by \cite{monderer1996potential}, where the gradient descent dynamic can effectively find the Nash equilibrium. For the vector potential game, we showed that the individual gradient field is divergence-free, which means that the gradient descent dynamic may either diverge or exhibit recurrent behavior. Technically, we introduced the Sobolev space to ensure that the gradient vector field is a Hilbert space, thereby achieving a direct sum decomposition of the simultaneous gradient field levels in differential games.

There are a number of possible future directions for this work. The first question is whether there exists a direct sum decomposition of differential games that gives an exact scalar potential game and an exact vector potential game. We conjecture that this would be unlikely, if not impossible, though there is no trivial way to show the negative result. The second important question is whether we can leverage this decomposition to improve existing learning dynamics in differential games. Namely, with what variant of optimization methods can we handle the divergence-free component?

\bibliography{ref}

\end{document}